\documentclass[11pt]{article}

\usepackage[T1]{fontenc}
\usepackage[utf8]{inputenc}
\usepackage[margin=1in]{geometry}
\usepackage{amsmath,amssymb,amsthm,mathtools,bm}
\usepackage{physics}
\usepackage{graphicx}
\usepackage{caption}
\usepackage[colorlinks=true,linkcolor=blue,citecolor=blue,urlcolor=blue]{hyperref}

\numberwithin{equation}{section}

\DeclareMathOperator{\Ai}{Ai}

\newtheorem{prop}{Proposition}[section]
\newtheorem{lemma}[prop]{Lemma}
\newtheorem{thm}[prop]{Theorem}
\newtheorem{cor}[prop]{Corollary}
\newtheorem{remark}[prop]{Remark}

\title{Edge universality in Floquet sideband spectra}
\author{Miguel Tierz\\[4pt]
{\normalsize\itshape Shanghai Institute for Mathematics and Interdisciplinary Sciences (SIMIS),}\\
{\normalsize\itshape Block A, No.~657 Songhu Road, Yangpu District, Shanghai 200438, China}\\[2pt]
{\normalsize\ttfamily tierz@simis.cn}}
\date{}

\begin{document}
\maketitle

\begin{abstract}
We show that, for non-interacting fermions under a monochromatic
phase drive (Tien--Gordon regime), the outgoing sideband occupations
at a sharp Fermi edge are governed by the discrete Bessel kernel---an
exact result at any drive amplitude~$A$. In the large-amplitude
regime the edge of this kernel converges, on the $A^{1/3}$ scale, to
the Airy kernel of random matrix theory. This universality has a
direct transport consequence: the deficit of the photo-assisted
shot-noise slope from its high-bias plateau collapses onto the
Airy-kernel diagonal. The derivation rests on a bridge between the
linear detection chain and the Floquet scattering matrix:
commensurate gating isolates a single coherence-order block of the
one-body correlator. We identify the crossover temperature below
which the Airy scaling is sharp, extend the analysis to biased
two-terminal occupations, and argue that multi-tone drives make
Pearcey-kernel cusps accessible in Floquet--Sambe space.
\end{abstract}

\section{Introduction}
Periodically driven (Floquet) quantum systems are now routinely realized in superconducting circuits, optomechanical cavities, semiconductor quantum dots, and cold-atom platforms~\cite{PlateroAguado2004,HaugJauho2008,Clerk2010,MoskaletsBook,Eckardt2017}. In all of these settings the experiment records a finite time trace of a linear output signal---a current, a voltage, a homodyne photocurrent~\cite{YuenChan1983}---and then processes it: filtering, demodulating, gating. The resulting number depends not only on the physics of the driven device but also on the details of this linear analysis chain.

The theory of photon-assisted transport (PAT) through AC-driven conductors is by now a mature field. Tien and Gordon~\cite{TienGordon1963} showed that a monochromatic phase drive $e^{iA\sin\Omega t}$ produces sideband weights given by Bessel functions $J_n(A)$---the Fourier coefficients of the drive's phase factor, with $A$ the dimensionless drive amplitude. Lesovik and Levitov~\cite{LesovikLevitov1994} showed that the photo-assisted shot noise (PASN) is sensitive to the AC drive even when the average current is featureless. Pedersen and B\"uttiker~\cite{PedersenBuett1998} cast PAT in a current-conserving Floquet scattering language; Moskalets and B\"uttiker~\cite{MoskBuett2002} placed it in the broader framework of Floquet scattering theory for quantum pumps and driven mesoscopic conductors (see also the monograph~\cite{MoskaletsBook}). On the experimental side, PASN has been quantitatively tested in quantum point contacts (QPCs)~\cite{Schoelkopf1998,Reydellet2003}; subsequent work reinterpreted it in terms of photon-created electron--hole pairs~\cite{RychkovPolianskiBuett2005} and extended this picture to a full counting statistics of elementary transfer events~\cite{LevitovLeeLesovik1996,VanevicNazarovBelzig2007}, while the electron quantum optics program used PASN as a spectroscopic tool for shaped voltage pulses and leviton states~\cite{Dubois2013,Jullien2014}. A comprehensive review of the field is given in~\cite{PlateroAguado2004}. The present paper does not revisit this established framework; rather, it identifies a new asymptotic regime---the large-$A$ soft edge of the sideband spectrum---where the Floquet correlator acquires the universal Airy-kernel structure familiar from random matrix theory (RMT)~\cite{Forrester2010}, and derives a concrete transport observable (the shot-noise plateau deficit) that isolates this edge.

This paper has two goals. The first is to make the logical chain from the measured number to the outgoing quantum state precise. Every linear readout of a Floquet signal can be written as a projection of the outgoing one-body correlator $C$~\cite{MoskBuett2002}, weighted by a detection kernel $K_{\rm phys}$ that encodes the instrument's time gate, filter, and analysis frequency~\cite{Clerk2010}. In matrix notation, $C=S^*NS^T$; for a single-channel scatterer, the scattering phase cancels and $C$ reduces to $JNJ^T$ with the real Bessel matrix $J_{nm}=J_{n-m}(A)$. The detection kernel belongs to the instrument; the correlator belongs to the device. Both are needed to make a falsifiable prediction for the measured signal, yet in much of the Floquet literature the analysis chain is left implicit. We make it explicit because it clarifies what the experiment can and cannot access at the sideband level.

The second goal is to show that the correlator itself has universal structure at its spectral edge. This universality does \emph{not} hold for arbitrary Floquet systems; it requires a specific combination of conditions:
\begin{enumerate}
  \item \emph{Non-interacting (free) fermions.} The outgoing state must be a Slater determinant~\cite{FetterWalecka}. Interactions, inelastic processes, and decoherence destroy the determinantal structure on which the entire analysis rests.
  \item \emph{Monochromatic, spatially uniform phase drive} (Tien--Gordon regime~\cite{TienGordon1963}). The drive enters only through a time-dependent gauge phase $e^{iA\sin\Omega t}$, producing a Toeplitz (constant along diagonals) Floquet scattering matrix with Bessel-function entries $J_{n-m}(A)$.
  \item \emph{Wide-band static scatterer.} The energy-independent factorization $S_{nm}(E)=\mathcal U(E)J_{n-m}(A)$ requires that $\mathcal U(E)$ varies negligibly over the sideband window $\sim 2A\hbar\Omega$. When this condition fails---e.g., near a resonance---the Bessel weights are modified~\cite{PedersenBuett1998}.
  \item \emph{Sharp Fermi step.} The discrete Bessel kernel emerges only when the effective occupation $\bar f_\alpha(E_m)=\sum_\beta|\mathcal U^{\alpha\beta}|^2 f_\beta(E_m)$ is a single sharp step $\Theta(m_F-m)$. This is exact when all reservoirs share the same chemical potential (equilibrium) or when a single reservoir dominates the outgoing channel. A standard two-terminal device with DC bias gives a \emph{double step} (see Sec.~\ref{sec:discrete-bessel} for the extension to this case).
  \item \emph{Zero temperature.} More precisely, $k_BT_{\rm el}\ll\hbar\Omega\kappa_A=\hbar\Omega(A/2)^{1/3}$. The crossover parameter is $\theta_A=k_BT_{\rm el}/\hbar\Omega\kappa_A$ (Sec.~\ref{sec:finite-T}); the Airy edge is sharp for $\theta_A\ll 1$ and thermally broadened for $\theta_A\gtrsim 1$.
  \item \emph{Large drive amplitude} $A\gg 1$. The Airy limit is an asymptotic statement as $A\to\infty$; the edge region has width $\sim A^{1/3}$ sidebands.
\end{enumerate}
The main experimental platform is a quantum point contact or tunnel junction driven by a monochromatic AC voltage at low temperature---the setting of Tien--Gordon photon-assisted transport. The exact single-step theorem applies when the device is in equilibrium or when a single populated reservoir dominates the outgoing channel; the standard biased two-terminal case is handled by the multi-step extension of Sec.~\ref{sec:multi-step}. Superconducting microwave circuits can also approximate this regime, but only when the circuit contains a fermionic channel (e.g., a normal-metal tunnel junction embedded in a cavity) and the drive is monochromatic; purely bosonic Josephson-junction modes do not satisfy the determinantal structure assumed here. Multi-mode, interacting, or strongly dissipative Floquet platforms---such as driven many-body cold-atom systems or periodically modulated strongly correlated materials---are \emph{not} expected to exhibit the specific Bessel/Airy edge structure derived here, though they may display other forms of universality.

The mathematical asymptotics that connect Bessel functions to the Airy function near a turning point are classical~\cite{Watson1944,Olver1974}: when the order $n$ of a Bessel function approaches its argument $A$, $J_n(A)$ crosses over from oscillatory to exponentially decaying, and in the transition region of width $\sim A^{1/3}$ the Bessel function is approximated by the Airy function. What is new in this paper is, first, the identification of the discrete Bessel kernel as the exact correlation kernel of the outgoing sideband process at any drive amplitude, and second, the derivation of a directly measurable transport observable---the photo-assisted shot-noise plateau deficit---whose large-$A$ scaling isolates the Airy edge without requiring sideband-resolved detection. This observable converges to the Airy-kernel diagonal $K_{\Ai}(s,s)=\Ai'(s)^2-s\,\Ai(s)^2$, not to $\Ai^2(s)$---the pointwise Debye--Olver envelope of the individual Bessel weights $J_\ell^2(A)$; the two functions have different structure in both the bulk and the tail (Fig.~\ref{fig:Ai2-vs-KAi}), and the distinction is the concrete sense in which the result goes beyond classical turning-point asymptotics.

Under conditions~(1)--(6), the outgoing sideband occupations at fixed central energy form a determinantal point process---all correlation functions are determinants of the two-point kernel~\cite{HKPV2009,Forrester2010}. In the single-step case, the kernel is the discrete Bessel kernel $K_A^{\rm(disc)}$; in the two-terminal double-step case (Sec.~\ref{sec:multi-step}), it is a weighted sum of two shifted Bessel kernels. In either case, the edges of the process lie near $n\sim m_F+A$, where $m_F$ is the Fermi cutoff, and in a turning-point region of width $\sim A^{1/3}$ the local kernel converges to the Airy kernel familiar from random matrix theory~\cite{TracyWidom1994,Forrester2010}. In the double-step case, each edge is governed by a thinned Airy kernel $\tau\,K_{\Ai}$ plus a smooth bulk background from the other step; the background is negligible in the tunnel limit $\mathcal T\ll 1$. This is proved in Theorem~\ref{thm:airy} and illustrated numerically in Figs.~\ref{fig:airy-collapse}--\ref{fig:kernel-2pt}. The same diagonal discrete Bessel kernel also appears in a different observable: the first bias derivative of the photo-assisted shot noise realizes $K_A^{\rm(disc)}(\nu_V,\nu_V)$, with $\nu_V:=\lceil eV/\hbar\Omega\rceil$, as a function of bias voltage, so that the deficit from the high-bias plateau provides a parameter-free test of the Airy scaling in a standard DC noise measurement (Sec.~\ref{sec:transport}).

Two figures encode the two halves of this story. Figure~\ref{fig:triptych} shows the measurement side: time gating, frequency windowing, and spectrogram analysis of the Floquet signal. Figure~\ref{fig:floquet-ladder} shows the device side: the sideband ladder, the Floquet scattering matrix $S_{nm}$ with mixing order $k=n-m$, and the outgoing correlator $C=S^*NS^T$. A recurring theme is the two-dimensional structure of Sambe space~\cite{Sambe1973}---Floquet angle $\times$ quasienergy---which provides a natural physical control manifold. In this manifold, cusp (Pearcey) edge singularities already known in random matrix theory~\cite{BrezinHikami1998,TracyWidomPearcey} become accessible: extra control parameters, such as additional drive tones, make cusps realizable in addition to folds (Airy).

\begin{figure}[t]
  \centering
  \includegraphics[page=1,width=\linewidth]{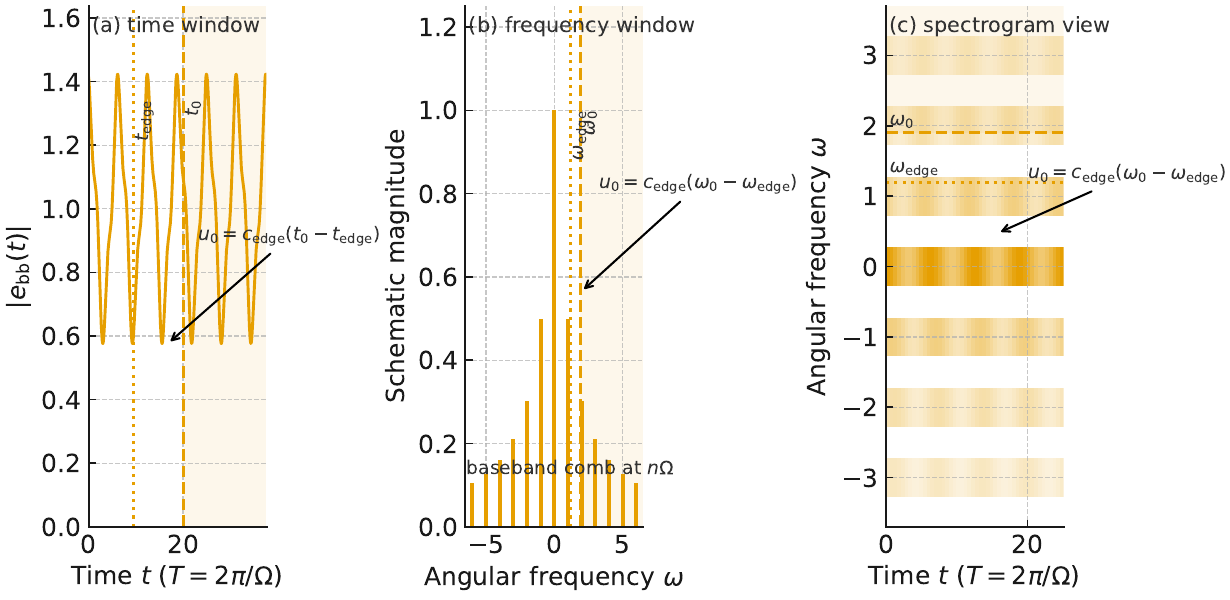}
  \caption{Windowed analysis of a Floquet signal. A finite time/frequency window defines the detection
  kernel that weights harmonics at $\omega=n\Omega$; the spectrogram shows their time localization.
  See Sec.~\ref{sec:measurement}.}
  \label{fig:triptych}
\end{figure}

\begin{figure}[t]
  \centering
  \includegraphics[page=1,width=0.72\linewidth]{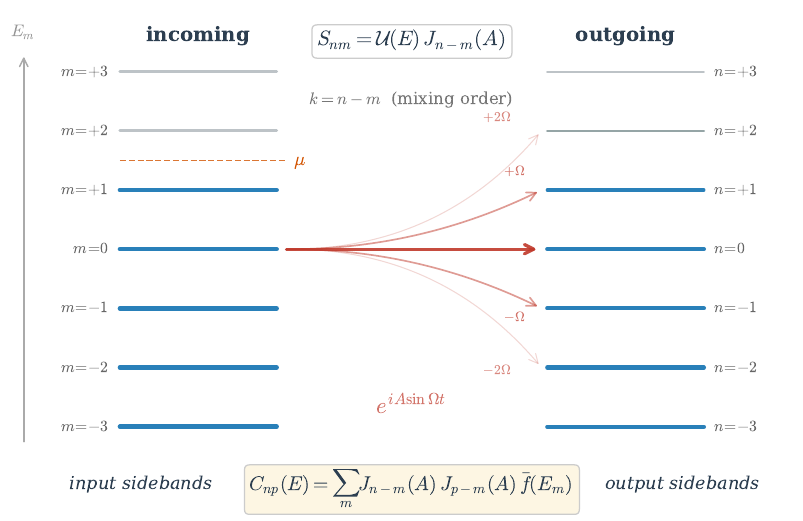}
  \caption{Floquet scattering between harmonics. Sideband $n$ receives amplitude from $m$ via $S_{nm}$
  (mixing order $k=n-m$); outgoing correlations obey $C=S^*NS^T$. See Sec.~\ref{sec:io}.}
  \label{fig:floquet-ladder}
\end{figure}

The paper is organized as follows. Section~\ref{sec:measurement} defines the detection kernel and explains why it must be specified. Section~\ref{sec:io} introduces the Floquet scattering formalism and derives the projection identity that bridges the measurement and device sides. Section~\ref{sec:toeplitz} specializes to the monochromatic Tien--Gordon model and derives the Toeplitz (Bessel) structure. Section~\ref{sec:discrete-bessel} identifies the outgoing correlator as a discrete Bessel kernel under conditions~(1)--(4) and extends the analysis to two-terminal double-step occupations. Section~\ref{sec:airy} proves the Airy-kernel soft-edge limit (condition~(6)). Section~\ref{sec:pearcey} extends the discussion to multi-tone drives, where the fold edge can be promoted to a cusp governed by the Pearcey kernel. Section~\ref{sec:readout-edge} brings the threads together and discusses the measurement signatures of both universality classes. Section~\ref{sec:transport} derives the concrete Airy-edge prediction for the photo-assisted shot-noise plateau deficit and the finite-temperature crossover.

\section{Measurement-side viewpoint: windowing, demodulation, and the detection kernel}
\label{sec:measurement}

We begin on the instrument side. The key object is the detection kernel $K_{\rm phys}$~\cite{Clerk2010}, which encodes everything the acquisition chain does to the raw signal before reporting a number. We first define it in the simplest case, then generalize and discuss its role.

\subsection{A linear readout functional}
Let $\hat Y(t)$ denote the Heisenberg operator associated with the measured output signal. Depending on
the implementation, $\hat Y(t)$ could be a photocurrent, a microwave voltage, a filtered current, or any
other linear transducer output. The simplest (and most common) readout consists of two operations: a rectangular time gate $W(t)=\mathbf{1}_{[t_0-L,\,t_0]}(t)$ that selects the acquisition interval, and multiplication by $e^{-i\omega_0 t}$ to demodulate at the analysis frequency $\omega_0$. The measured complex number is then
\begin{equation}
  \hat u_0(\omega_0;t_0)
  := \int_{t_0-L}^{t_0} dt\; e^{-i\omega_0 t}\,\hat Y(t).
  \label{eq:u0simple}
\end{equation}
This is the operator version of the standard IQ (in-phase/quadrature) projection~\cite{Black2001,Bjorklund1983}: $\mathrm{Re}\,\hat u_0$ and $\mathrm{Im}\,\hat u_0$ are the familiar $I$ and $Q$ quadratures.

More generally, an additional low-pass filter or integrator with impulse response $h(t)$ may precede the readout, giving the detection kernel
\begin{equation}
  K_{\rm phys}(t;\omega_0,t_0) := h(t_0-t)\,W(t)\,e^{-i\omega_0 t},
  \qquad
  \hat u_0(\omega_0;t_0)
  = \int_{-\infty}^{\infty} dt\; K_{\rm phys}(t;\omega_0,t_0)\,\hat Y(t).
  \label{eq:Kphys}
\end{equation}
The corresponding frequency response,
\begin{equation}
  c_{\rm det}(\Delta\omega)
  := \int_{-\infty}^{\infty} dt\; h(t_0-t)W(t)e^{-i\Delta\omega t},
  \label{eq:cdet}
\end{equation}
will be referred to as the detector response function. For the simple rectangular gate with $h\equiv 1$, this reduces to $c_{\rm det}(\Delta\omega)=e^{-i\Delta\omega(t_0-L/2)}L\,\mathrm{sinc}(\Delta\omega L/2)$, where $\mathrm{sinc}(x):=\sin(x)/x$ (with $\mathrm{sinc}(0):=1$). Tapered windows (Hann, Blackman, etc.~\cite{Harris1978}) suppress spectral leakage at the cost of a broader main lobe, but the rectangular gate suffices for the integer-period orthogonality argument used in the rest of this paper.

\begin{remark}[Heterodyne is convenient, not necessary]
The signal-processing viewpoint is \emph{not} intrinsically heterodyne. Analog heterodyne detection is one
physical way to implement the projection~\eqref{eq:Kphys}, but the same quantity can be obtained from a
digitized trace by numerical Fourier binning or short-time Fourier analysis. Homodyne detection corresponds
to $\omega_0=0$ in the rotating frame, heterodyne to $\omega_0\neq 0$, and digital post-processing to the
same integral carried out offline; see~\cite{YuenChan1983,Clerk2010}.
\end{remark}

\subsection{Why the detection kernel must appear in the theory}
The detection kernel is not an optional add-on; it is the mathematical representation of the experimental fact that
no apparatus records the full operator-valued function $\hat Y(t)$ directly. Every readout scheme projects the
time trace onto a finite temporal mode. A rectangular gate gives one mode, a Hann window~\cite{Harris1978} another, a narrow intermediate-frequency (IF) filter yet another. Any theory intended to predict what is actually reported by the experiment must include
$K_{\rm phys}$ at the point where the operator $\hat Y(t)$ is converted into the measured
complex amplitude $\hat u_0$.

This observation is especially important in Floquet systems because the signal naturally decomposes into
harmonics spaced by the drive frequency~\cite{Sambe1973,MoskBuett2002}. The detection kernel determines whether these harmonics are resolved,
partially mixed, or averaged out. In particular, if the gate spans an integer number of periods and the filter
is ideal, then the analysis frequency $\omega_0=k\Omega$ picks out a single Fourier block exactly. That exact
block-selection property will later allow us to connect the measured signal directly to the sideband correlator
$C_{np}$.

\section{Input--output Floquet scattering and outgoing correlators}
\label{sec:io}

We now turn to the device side. The periodically driven scatterer is described by a Floquet scattering matrix $S_{nm}(E)$~\cite{MoskBuett2002,LiReichl1999} that connects incoming and outgoing sideband operators. From $S$ and the incoming occupation $N$, we construct the outgoing one-body correlator $C$ and derive the projection identity that bridges the measurement and device sides.

\subsection{Sideband operators and the Floquet scattering matrix}
Let $E\in[0,\hbar\Omega)$ denote a central energy in a Floquet Brillouin zone~\cite{Sambe1973}. For each lead or channel
$\alpha$ we introduce fermionic annihilation operators
\[
  a_{\alpha m}(E),\qquad b_{\alpha n}(E),\qquad m,n\in\mathbb Z,
\]
which remove incoming and outgoing particles at energies
\[
  E_m = E + m\hbar\Omega,\qquad E_n = E + n\hbar\Omega.
\]
They satisfy the canonical anti-commutation relations
\begin{equation}
  \{a_{\alpha m}(E),a^\dagger_{\beta m'}(E')\}
  = 2\pi\,\delta(E-E')\,\delta_{\alpha\beta}\,\delta_{mm'},
  \qquad
  \{b_{\alpha n}(E),b^\dagger_{\beta n'}(E')\}
  = 2\pi\,\delta(E-E')\,\delta_{\alpha\beta}\,\delta_{nn'},
  \label{eq:CAR}
\end{equation}
with all other anti-commutators zero.

For a periodically driven, phase-coherent scatterer, incoming and outgoing sidebands are related by the
Floquet scattering matrix~\cite{LiReichl1999,MoskBuett2002}:
\begin{equation}
  b_{\alpha n}(E) = \sum_{\beta,m\in\mathbb Z} S^{\alpha\beta}_{nm}(E)\,a_{\beta m}(E).
  \label{eq:bSa}
\end{equation}
The integer
\begin{equation}
  k := n-m
  \label{eq:mixingorder}
\end{equation}
is the \emph{mixing order}: it counts the net number of drive quanta exchanged between input and output sidebands.
Figure~\ref{fig:floquet-ladder} gives the corresponding bookkeeping picture.

If reservoir $\beta$ is in equilibrium with distribution $f_\beta$~\cite{MoskBuett2002,MoskaletsBook}, then
\begin{equation}
  \langle a^\dagger_{\beta m}(E)a_{\beta' m'}(E')\rangle
  = 2\pi\,\delta(E-E')\,\delta_{\beta\beta'}\,\delta_{mm'}\,f_\beta(E_m).
  \label{eq:inputocc}
\end{equation}
The outgoing one-body correlator in lead $\alpha$ is therefore
\begin{equation}
  C^{(\alpha)}_{np}(E)
  := \langle b^\dagger_{\alpha n}(E)b_{\alpha p}(E)\rangle
  = \sum_{\beta,m} S^{\alpha\beta*}_{nm}(E)\, f_\beta(E_m)\, S^{\alpha\beta}_{pm}(E).
  \label{eq:C}
\end{equation}
In matrix notation, $C=S^*NS^T$ with $N_{mm'}=\delta_{mm'}f(E_m)$. In the single-channel Tien--Gordon model, $S_{nm}=e^{i\delta(E)}J_{n-m}(A)$ where $\delta(E)$ is a scattering phase. Since $\delta$ is a scalar, it cancels in the product $S^*NS^T = JNJ^T$ (equivalently, $SNS^\dagger=JNJ^T$), and the correlator depends only on the real Bessel entries:
\begin{equation}
  C_{np} = \sum_m J_{n-m}(A)\,J_{p-m}(A)\,f(E_m).
  \label{eq:CJNJ}
\end{equation}
This simplification holds for any single-channel scatterer, regardless of the value of $\delta(E)$. For a multi-channel device with a non-scalar $\mathcal U^{\alpha\beta}(E)$, the full matrix form $C=S^*NS^T$ must be retained, and the correlator acquires a channel-dependent prefactor (see Sec.~\ref{sec:discrete-bessel} for the conditions under which it reduces to a Bessel kernel).

Equation~\eqref{eq:C} is the correlation matrix represented at the output of the ladder diagram in Fig.~\ref{fig:floquet-ladder}.

\subsection{Connecting the measurement to the correlator}
\label{sec:bridge}
With the detection kernel defined on the instrument side and the correlator $C$ defined on the device side, we can now state the central identity that connects them. A closely related signal-processing viewpoint on electronic coherences has been developed independently in the electron quantum optics literature~\cite{Roussel2017}. The physics is simple: because the output signal $\hat Y_\alpha(t)$ is bilinear in the outgoing field operators, and because the detection kernel is a linear weighting in time, the measured expectation value is a bilinear functional of $C$, weighted by the detector response $c_{\rm det}$. The analysis frequency $\omega_0$ then acts as a tunable selector that picks out a specific off-diagonal block of the Floquet correlator.

Concretely, assume that the detector is sensitive
to a linear, number-conserving output observable in lead $\alpha$~\cite{Clerk2010}. Any such observable is a one-body operator, bilinear in the outgoing field, and can be expanded in the Floquet basis~\cite{Sambe1973,MoskBuett2002} as
\begin{equation}
  \hat Y_\alpha(t)
  = \sum_{n,p\in\mathbb Z}\int_0^{\hbar\Omega}\frac{dE}{2\pi\hbar}
    \,e^{i(n-p)\Omega t}\,\mathcal G^{(\alpha)}_{np}(E)
    \,b^\dagger_{\alpha n}(E)b_{\alpha p}(E),
  \label{eq:Yexpand}
\end{equation}
where the coupling matrix $\mathcal G^{(\alpha)}_{np}(E)$ depends on the specific transducer and is in general \emph{not} diagonal in the sideband indices; for example, a current operator mixes different sidebands through the velocity matrix elements. The decomposition~\eqref{eq:Yexpand} is the most general number-conserving one-body form at fixed central energy $E$; it excludes anomalous terms ($b\,b$ or $b^\dagger b^\dagger$) by the number-conservation assumption. The key observation is that the time dependence of each term is carried entirely by the phase $e^{i(n-p)\Omega t}$, so the detection kernel acts on it through the detector response evaluated at the detuning $\omega_0-(n-p)\Omega$.

Inserting~\eqref{eq:Yexpand} into~\eqref{eq:Kphys} and taking the expectation value gives the projection identity
\begin{equation}
  \langle \hat u_0^{(\alpha)}(\omega_0;t_0)\rangle
  = \sum_{n,p\in\mathbb Z}\int_0^{\hbar\Omega}\frac{dE}{2\pi\hbar}
    \,\mathcal G^{(\alpha)}_{np}(E)
    \,c_{\rm det}\!\big(\omega_0-(n-p)\Omega\big)
    \,C^{(\alpha)}_{np}(E).
  \label{eq:projection}
\end{equation}
The derivation is a one-line substitution: the time integral over $K_{\rm phys}\cdot e^{i(n-p)\Omega t}$ produces $c_{\rm det}(\omega_0-(n-p)\Omega)$ by definition~\eqref{eq:cdet}, and $\langle b^\dagger_{\alpha n}b_{\alpha p}\rangle = C^{(\alpha)}_{np}$. The physical content is that the analysis frequency $\omega_0$ selects the $(n-p)$-th off-diagonal block of $C$~\cite{Clerk2010}: setting $\omega_0=0$ probes the diagonal (sideband populations), while $\omega_0=k\Omega$ probes the $k$-th off-diagonal (inter-sideband coherences).

This selection becomes exact---with zero spectral leakage---when the time gate spans an integer number of drive periods and the analysis frequency is set to an integer multiple of $\Omega$.

\begin{lemma}[Exact block selection for an integer-period gate]\label{lem:block}
Assume $W$ is the indicator function of an interval of length $L=NT$ containing an integer number of periods,
and assume $h\equiv 1$ over that interval. Set $\omega_0=k\Omega$ with $k\in\mathbb Z$. Then
\begin{equation}
  c_{\rm det}\!\big((k-(n-p))\Omega\big)
  = e^{-i(k-(n-p))\Omega(t_0-L/2)}\,L\,\delta_{k,\,n-p},
\end{equation}
where the Kronecker delta is between the integers $k$ and $n-p$. Hence the choice $\omega_0=k\Omega$
selects the $k$-th off-diagonal of $C$ exactly, with no leakage between different values of $n-p$.
\end{lemma}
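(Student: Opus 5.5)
The plan is to compute $c_{\rm det}$ directly from its definition~\eqref{eq:cdet} for the rectangular gate, and then feed the result into the projection identity~\eqref{eq:projection}. With $h\equiv 1$ on the gate and $W=\mathbf 1_{[t_0-L,t_0]}$, we have
\[
c_{\rm det}(\Delta\omega)=\int_{t_0-L}^{t_0}e^{-i\Delta\omega t}\,dt .
\]
We evaluate this at $\Delta\omega=(k-(n-p))\Omega=:j\Omega$ with $j\in\mathbb Z$. The first step is to recenter the integral at the gate midpoint via $t=t_0-L/2+s$, $s\in[-L/2,L/2]$. This extracts the phase $e^{-ij\Omega(t_0-L/2)}$ and leaves $\int_{-L/2}^{L/2}e^{-ij\Omega s}ds=L\,\mathrm{sinc}(j\Omega L/2)$. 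This reproduces the closed form already quoted after~\eqref{eq:cdet}.

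The second step evaluates the sinc at the integer detuning. Since $L=NT=2\pi N/\Omega$, the argument is $j\Omega L/2=\pi jN$.
\begin{itemize}
\item For $j\neq0$ this is a nonzero integer multiple of $\pi$ (note $N\ge1$), so the sine vanishes and $c_{\rm det}=0$.
\item For $j=0$ we use the convention $\mathrm{sinc}(0)=1$, which gives $c_{\rm det}=L$. The phase factor is $1$ in this case, so it is harmless to keep it in the stated formula.
\end{itemize}
Together these give exactly $e^{-ij\Omega(t_0-L/2)}L\,\delta_{j,0}=e^{-i(k-(n-p))\Omega(t_0-L/2)}L\,\delta_{k,n-p}$. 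Equivalently, one can bypass the sinc entirely and note that $e^{-ij\Omega t}$ integrates to zero over each full period for $j\neq0$.

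The third step substitutes this into~\eqref{eq:projection} with $\omega_0=k\Omega$. The Kronecker delta kills every term with $n-p\neq k$, leaving
\[
\langle\hat u_0\rangle=L\sum_{n-p=k}\int_0^{\hbar\Omega}\frac{dE}{2\pi\hbar}\,\mathcal G_{np}C_{np}.
\]
So only the $k$-th off-diagonal of $C$ contributes, with no leakage.

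There is no real obstacle here. The only point requiring care is the bookkeeping of the sign convention $\Delta\omega=\omega_0-(n-p)\Omega$, so that the delta lands on $k=n-p$ rather than $k=p-n$. A minor secondary point is that interchanging the sideband sum with the time integral is justified, which is fine because $L$ is finite and the relevant sums converge, e.g.\ by Bessel decay.
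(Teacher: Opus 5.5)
Your proposal is correct and is essentially the paper's proof. Both recenter the rectangular-gate integral to get the phase $e^{-ij\Omega(t_0-L/2)}$ times $L\,\mathrm{sinc}(\pi jN)$, and then use that the sinc vanishes at nonzero integer multiples of $\pi$. Your final substitution into the projection identity simply reproduces the paper's Eq.~\eqref{eq:u0-block}.
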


\begin{proof}
With $c_{\rm det}(\Delta\omega)=\int_{t_0-L}^{t_0}e^{-i\Delta\omega t}\,dt$ and setting $\Delta\omega=(k-(n-p))\Omega$, one obtains
\[
  c_{\rm det}\big((k-(n-p))\Omega\big)
  = e^{-i(k-(n-p))\Omega(t_0-L/2)}\,L\,\mathrm{sinc}\!\left((n-p-k)\pi N\right).
\]
Since $n-p-k$ is an integer, the sinc vanishes unless $n-p=k$, in which case it equals $1$.
\end{proof}

This is Fourier orthogonality at work~\cite{Harris1978}: a commensurate gate gives an exact discrete Fourier projection onto the Floquet harmonics.

\paragraph{Phase reference and block selection.}
With an integer-period gate and $\omega_0=k\Omega$, the detuning on the selected block vanishes, so the gate-position phase $e^{i\cdot 0\cdot(t_0-L/2)}=1$ drops out. The measured first moment reduces to
\begin{equation}
  \langle\hat u_0(k\Omega)\rangle = L\sum_n\int_0^{\hbar\Omega}\frac{dE}{2\pi\hbar}\;\mathcal G_{n,n-k}(E)\,C_{n,n-k}(E).
  \label{eq:u0-block}
\end{equation}
The quantity~\eqref{eq:u0-block} is a $\mathcal G$-weighted, energy-integrated sum over the $k$-th off-diagonal of $C$---not an individual matrix element. It is real-valued for $k=0$ (diagonal block) and complex for $k\neq 0$.

For $k\neq 0$, the complex phase of~\eqref{eq:u0-block} depends on the \emph{relative phase between the local oscillator (LO) and the drive}, not on the gate position. If the LO is derived from the same oscillator as the drive (synchronous detection~\cite{Bjorklund1983}), this relative phase is stable and $\langle\hat u_0(k\Omega)\rangle$ is a well-defined complex amplitude. If the LO is free-running, the relative phase drifts and the first moment averages to zero over many shots; off-diagonal correlations are then accessible only through \emph{second-order} observables (intensity, noise power~\cite{Clerk2010}), as discussed below.

\paragraph{Linear versus quadratic observables.}
The projection identity~\eqref{eq:projection} and its block-selected form~\eqref{eq:u0-block} concern the first moment $\langle\hat u_0\rangle$---a \emph{linear} functional of the one-body correlator $C$. However, many experimentally relevant quantities---noise power, photon-number variance, zero-frequency noise spectral density---are second-order in the field and therefore involve a \emph{quadratic} functional of $C$. For Gaussian or free-fermion states, Wick's theorem~\cite{FetterWalecka} factorizes the requisite fourth-order correlators into products of $C$ entries. Schematically, the measured intensity decomposes as
\begin{equation}
  \langle |\hat u_0(\omega_0)|^2\rangle
  = \abs{\langle\hat u_0\rangle}^2
  + \mathrm{Var}(\hat u_0),
  \label{eq:intensity}
\end{equation}
where the variance $\mathrm{Var}(\hat u_0) = \langle |\hat u_0|^2\rangle - |\langle\hat u_0\rangle|^2$ involves a double sum over sideband pairs weighted by $|c_{\rm det}|^2$ and by connected Wick contractions~\cite{FetterWalecka} of $C$. The precise form depends on the coupling matrix $\mathcal G$ and is not needed here; the key structural point is that the variance is present even when the first moment vanishes (free-running gate, $k\neq 0$), and it provides a second, independent route to the sideband correlator.

For the diagonal block ($k=0$), the first moment~\eqref{eq:u0-block} yields a $\mathcal G$-weighted sum of populations $C_{nn}(E)$, integrated over $E$; no phase reference is needed. For off-diagonal blocks ($k\neq 0$, inter-sideband coherences), one must either maintain a stable LO--drive phase reference to access the complex amplitude through the first moment, or work with the noise power (quadratic observable) which accesses weighted quadratic functionals of $C$ without phase sensitivity. In either case, the measurement returns a weighted aggregate over the block, not individual matrix elements, unless additional sideband or energy resolution is available.

\section{Monochromatic phase drive and Toeplitz Floquet scattering}
\label{sec:toeplitz}
We now specialize to the model that leads most naturally to Bessel functions. Consider a
static scatterer with scattering matrix $\mathcal U^{\alpha\beta}(E)$, driven by a monochromatic, spatially uniform AC potential
\begin{equation}
  V(t) = V_{\rm ac}\cos(\Omega t).
\end{equation}
The corresponding gauge phase is
\begin{equation}
  \phi(t) = \frac{e}{\hbar}\int^t V(t')\,dt' = A\sin(\Omega t),
  \qquad A := \frac{eV_{\rm ac}}{\hbar\Omega}.
\end{equation}
The time-dependent phase factor is therefore
\begin{equation}
  e^{i\phi(t)} = e^{iA\sin(\Omega t)} = \sum_{k\in\mathbb Z} J_k(A)e^{ik\Omega t},
  \label{eq:JacobiAnger}
\end{equation}
by the Jacobi--Anger expansion~\cite{Watson1944}. Since the drive is spatially uniform, it does not modify the internal mode
structure of the scatterer; it only shifts sideband index. Consequently,
\begin{equation}
  S^{\alpha\beta}_{nm}(E) = \mathcal U^{\alpha\beta}(E)J_{n-m}(A).
  \label{eq:ToeplitzS}
\end{equation}
The dependence on $n$ and $m$ is Toeplitz: only the difference $n-m$ matters. The factored form~\eqref{eq:ToeplitzS} requires that $\mathcal U^{\alpha\beta}(E)$ varies negligibly over the sideband window of width $\sim 2A\hbar\Omega$ centered at the Fermi energy; this wide-band condition is standard in the Tien--Gordon literature~\cite{TienGordon1963,MoskBuett2002} and is well satisfied whenever the static scattering is smooth on the scale of the drive quantum $\hbar\Omega$.

The same Bessel functions that govern a single phase-modulated tone therefore govern the quantum Floquet
$S$~matrix~\cite{TienGordon1963,Roussel2017}. For a monochromatic, spatially uniform phase drive the signal-processing and scattering pictures are not merely analogous;
they are literally described by the same harmonic coefficients.

More generally, any spatially uniform drive---not necessarily a single cosine---produces a Toeplitz Floquet scattering matrix of the form $S^{\alpha\beta}_{nm}=\mathcal U^{\alpha\beta}(E)\,\mathcal F_{n-m}$, where $\mathcal F_k = (2\pi)^{-1}\int_0^{2\pi}e^{i\phi(\theta)}e^{-ik\theta}d\theta$ are the Fourier coefficients of the gauge phase factor $e^{i\phi(t)}$. For a single tone $\phi(t)=A\sin\Omega t$ these reduce to $J_k(A)$; for a multi-tone drive such as $\phi(t)=A_1\sin\Omega t+A_2\sin 2\Omega t$ they become products or convolutions of Bessel functions. This general structure underlies the electron quantum optics program, where shaped voltage pulses with specific $\mathcal F_k$ profiles are used to generate minimal-excitation states (levitons)~\cite{Dubois2013} and PASN serves as a spectroscopic tool for the pulse shape~\cite{VanevicNazarovBelzig2007}. The Toeplitz structure and unitarity ($\sum_k|\mathcal F_k|^2=1$) hold in all cases; what changes is the detailed shape of the sideband envelope, and in particular the geometry of its edge (fold, cusp, or higher). The Bessel case treated below is the simplest and best-studied instance. Whether the same Airy edge persists for general $\mathcal F_k$ is plausible on the grounds that the integral defining $\mathcal F_k$ has the same stationary-phase structure as $J_k(A)$ near the turning point, but a proof would require a separate analysis of the relevant oscillatory integral; we do not pursue this here. For example, the Lorentzian pulse used in leviton generation~\cite{Dubois2013} gives $\mathcal F_k$ that decay exponentially for large $k$, with no turning-point (fold) structure; in that case the sideband edge is \emph{not} of Airy type, and the edge universality class of the corresponding kernel is an open question (see Sec.~\ref{sec:open}).

\begin{lemma}[Bessel orthogonality and unitarity]\label{lem:unitarity}
For fixed $A\in\mathbb R$ one has
\begin{equation}
  \sum_{r\in\mathbb Z} J_{r+k}(A)J_{r+\ell}(A) = \delta_{k\ell}.
  \label{eq:BesselOrthogonality}
\end{equation}
Hence the sideband matrix $[J_{n-m}(A)]_{n,m\in\mathbb Z}$ is unitary on $\ell^2(\mathbb Z)$, and if
$\mathcal U(E)$ is unitary then so is the full Floquet scattering matrix~\eqref{eq:ToeplitzS}.
\end{lemma}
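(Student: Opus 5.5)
The plan is to derive the orthogonality relation~\eqref{eq:BesselOrthogonality} from the Jacobi--Anger expansion~\eqref{eq:JacobiAnger}, read as a Fourier series, by applying Parseval's identity on $L^2([0,2\pi])$. Put $\theta=\Omega t$ and $g(\theta):=e^{iA\sin\theta}=\sum_{k}J_k(A)e^{ik\theta}$. For real $A$ the function $g$ is smooth and has modulus one, so its Fourier coefficients $J_k(A)=(2\pi)^{-1}\int_0^{2\pi}g(\theta)e^{-ik\theta}\,d\theta$ are square-summable. The coefficients of the conjugate $\bar g(\theta)$ are $\overline{J_{-k}(A)}=J_{-k}(A)$, because $J_k(A)$ is real for real $A$. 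We will use the reflection identity $J_{-k}=(-1)^kJ_k$ only where a sign has to be tracked.

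First, we apply Parseval to the pair $e^{-ik\theta}g$ and $e^{-i\ell\theta}g$. The first has Fourier coefficients $J_{r+k}(A)$ at frequency $r$, the second $J_{r+\ell}(A)$. Parseval then gives
\[
\sum_r J_{r+k}(A)\,\overline{J_{r+\ell}(A)}=\frac{1}{2\pi}\int_0^{2\pi}|g(\theta)|^2e^{i(\ell-k)\theta}\,d\theta=\frac{1}{2\pi}\int_0^{2\pi}e^{i(\ell-k)\theta}\,d\theta=\delta_{k\ell}.
\]
Since the $J$'s are real, the conjugation on the left can be dropped, and this is exactly~\eqref{eq:BesselOrthogonality}. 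Absolute convergence of the series is not an issue: it follows from Cauchy--Schwarz together with $\sum_r J_r^2(A)=1$, which is the case $k=\ell$.

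Second, we translate the result into unitarity of the infinite matrix $\mathsf J:=[J_{n-m}(A)]_{n,m}$. Relation~\eqref{eq:BesselOrthogonality} says the columns $m\mapsto(J_{n-m})_n$ are orthonormal, since $\sum_n J_{n-k}J_{n-\ell}=\delta_{k\ell}$ after the reindexing $r=n$ with shifts $-k,-\ell$; the sign of the shifts is immaterial because $k,\ell$ range over $\mathbb Z$. The same computation with indices swapped gives orthonormal rows. Hence $\mathsf J^T\mathsf J=\mathsf J\mathsf J^T=\mathbb 1$, and since $\mathsf J$ is real, $\mathsf J^\dagger=\mathsf J^T$.

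Boundedness on $\ell^2(\mathbb Z)$ deserves a separate argument. $\mathsf J$ is the Laurent operator whose symbol is $g$; under the Fourier transform it is unitarily equivalent to multiplication by $g(\theta)$ (or $g(-\theta)$, depending on convention) on $L^2$. Because $|g|=1$, that multiplication operator is unitary. This operator-level argument is the cleanest, and I would present it as the main proof, with Parseval as the explicit entrywise check. The only real obstacle is this infinite-dimensional point, namely that orthonormal rows and columns alone must be complemented by boundedness and surjectivity; the multiplication-operator picture settles both at once.

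Finally, for the full Floquet matrix~\eqref{eq:ToeplitzS}, we write $S=\mathcal U(E)\otimes\mathsf J$ on $\mathbb C^{N_{\rm ch}}\otimes\ell^2(\mathbb Z)$. A tensor product of unitaries is unitary. Entrywise, this reads
\[
\sum_{\beta,m}\overline{\mathcal U^{\beta\alpha}}J_{m-n}\,\mathcal U^{\beta\gamma}J_{m-p}=(\mathcal U^\dagger\mathcal U)_{\alpha\gamma}\,\delta_{np}=\delta_{\alpha\gamma}\delta_{np},
\]
and the analogous identity holds for $SS^\dagger$.
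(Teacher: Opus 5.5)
Your proof is correct and follows the paper's route: the orthogonality relation is Parseval's identity applied to the Jacobi--Anger expansion of $e^{iA\sin\theta}$, using $|e^{iA\sin\theta}|=1$. Your Laurent-operator argument (the matrix $[J_{n-m}(A)]$ is unitarily equivalent to multiplication by $e^{iA\sin\theta}$ on $L^2$) and the tensor-product check $S=\mathcal U\otimes\mathsf J$ spell out the ``hence'' that the paper leaves implicit; strictly, though, boundedness and surjectivity are not an extra obstacle, since orthonormal columns already give an isometry and orthonormal rows a co-isometry on $\ell^2(\mathbb Z)$.
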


\begin{proof}
Expand the phase factor~\eqref{eq:JacobiAnger} and use Parseval's identity on the unit circle~\cite{Watson1944}:
\[
  \delta_{k\ell}
  = \frac{1}{2\pi}\int_0^{2\pi} e^{-ik\theta}e^{i\ell\theta}\,d\theta
  = \frac{1}{2\pi}\int_0^{2\pi}
    \left(\sum_r J_{r+k}(A)e^{-ir\theta}\right)
    \left(\sum_s J_{s+\ell}(A)e^{is\theta}\right)
    d\theta.
\]
Orthogonality of the Fourier exponentials gives~\eqref{eq:BesselOrthogonality}.
\end{proof}

\section{Free fermions and the discrete Bessel kernel}
\label{sec:discrete-bessel}
We now derive the sideband correlation kernel for the Toeplitz scattering matrix~\eqref{eq:ToeplitzS} acting on a zero-temperature Fermi sea~\cite{MoskBuett2002,PedersenBuett1998}. This requires specifying how the static scattering matrix $\mathcal U$ and the reservoir structure enter. The factored form $S^{\alpha\beta}_{nm}=\mathcal U^{\alpha\beta}(E)J_{n-m}(A)$ allows the $\mathcal U$ prefactors to be pulled out of the sideband sum in~\eqref{eq:C}:
\begin{equation}
  C^{(\alpha)}_{np}(E)
  = \sum_m J_{n-m}(A)\,J_{p-m}(A)\;\underbrace{\sum_\beta \abs{\mathcal U^{\alpha\beta}(E)}^2 f_\beta(E_m)}_{\displaystyle\equiv\;\bar f_\alpha(E_m)}.
  \label{eq:Cgeneral}
\end{equation}
Here $\bar f_\alpha(E_m)$ is the effective occupation seen in the outgoing lead $\alpha$: a weighted average of the reservoir distributions, with weights given by the static transmission probabilities $|\mathcal U^{\alpha\beta}|^2$.

In general $\bar f_\alpha$ is a smooth function of $m$ (not a sharp step), because different reservoirs contribute at different chemical potentials. The clean discrete Bessel kernel emerges when $\bar f_\alpha$ reduces to a sharp Fermi step. This happens in two physically transparent cases:
\begin{itemize}
  \item \emph{Identical reservoirs.} If all reservoirs share the same distribution $f_\beta=f$ for every $\beta$, then $\bar f_\alpha(E_m)=f(E_m)\sum_\beta|\mathcal U^{\alpha\beta}|^2 = f(E_m)$ by unitarity of $\mathcal U$.
  \item \emph{Single populated reservoir.} If a single incoming channel dominates the outgoing lead (e.g., one reservoir completely fills the incoming mode and others are empty), then $\bar f_\alpha$ reduces to the occupation of that reservoir.
\end{itemize}
In either case, for each fixed central energy $E$ the effective occupation is a sharp step in the sideband index at zero temperature:
\begin{equation}
  \bar f_\alpha(E_m) = \Theta\!\big(m_F(E)-m\big),
  \label{eq:Fermistep}
\end{equation}
where
\begin{equation}
  m_F(E) := \left\lfloor\frac{\mu-E}{\hbar\Omega}\right\rfloor
  \label{eq:mF-def}
\end{equation}
is the largest integer $m$ for which $E_m=E+m\hbar\Omega\le\mu$ (with $\mu$ the chemical potential). We derive the discrete Bessel kernel and its Airy limit under this single-step assumption. The extension to the experimentally important two-terminal case is given at the end of this section. Inserting~\eqref{eq:Fermistep} into~\eqref{eq:Cgeneral} gives
\begin{equation}
  C_{np}(E)
  = \sum_{m\le m_F(E)} J_{n-m}(A)J_{p-m}(A).
  \label{eq:CToeplitz}
\end{equation}
Shift the summation index by $r=m_F(E)-m\in\mathbb N_0$ and define
\begin{equation}
  \nu := n-m_F(E),\qquad \lambda := p-m_F(E).
\end{equation}
Then
\begin{equation}
  K_A^{\rm (disc)}(\nu,\lambda)
  := C_{np}(E)
  = \sum_{r=0}^{\infty} J_{\nu+r}(A)J_{\lambda+r}(A).
  \label{eq:discreteBessel}
\end{equation}
This is the \emph{discrete Bessel kernel}. Mathematically, it is the same kernel that arises in the Poissonized Plancherel measure on partitions and in discrete orthogonal polynomial ensembles~\cite{Johansson2001,Borodin2000}; here it emerges from the physics of Floquet scattering. It is the outgoing correlation kernel of the Floquet sideband process, expressed in the shifted indices $\nu=n-m_F(E)$, $\lambda=p-m_F(E)$ that measure the distance from the Fermi cutoff. (We use $\lambda$ for the second index to avoid collision with the chemical potential $\mu$ in Eq.~\eqref{eq:mF-def}.) The edge of the kernel---where Bessel functions cross over from oscillatory to exponentially decaying---lies at $\nu\sim A$, corresponding to outgoing sideband $n\sim m_F(E)+A$.

Physically, the kernel encodes the fact that the drive redistributes incoming fermions across sidebands with Bessel-function amplitudes, and the Fermi step cuts off the sum from below. At $T=0$ with a single sharp step, the kernel is a \emph{projection kernel}: $K_A^{\rm(disc)}$ is idempotent ($K^2=K$), reflecting the Slater-determinant structure of the incoming state~\cite{FetterWalecka}. At finite temperature or with a multi-step occupation, the kernel is no longer a projection but remains a quasi-free (non-projection) kernel with eigenvalues in $[0,1]$; the correlation structure is modified~\cite{Johansson2007,DeanLeDoussal2015,DeanLeDoussal2016}, but the process remains determinantal because the incoming state is a Slater determinant and the scattering map is linear: all $k$-point correlation functions of the outgoing occupations are determinants built from the two-point kernel $K_A^{\rm (disc)}$. In particular,
if $N_\nu=b_\nu^\dagger b_\nu$ denotes the sideband number operator, then Wick's theorem~\cite{FetterWalecka} implies
\begin{equation}
  \langle N_\nu\rangle = K_A^{\rm (disc)}(\nu,\nu),
  \qquad
  \langle N_\nu N_\lambda\rangle_c = -\abs{K_A^{\rm (disc)}(\nu,\lambda)}^2\qquad (\nu\neq\lambda).
  \label{eq:determinantal}
\end{equation}
Thus the kernel determines all joint occupation statistics for any finite subset of sideband indices at fixed central energy $E$. We emphasize the qualifier ``at fixed $E$'': the sideband process is determinantal separately at each $E$, and different values of $E$ contribute independently. The full process on the product space $\mathbb Z\times[0,\hbar\Omega)$ (sideband index $\times$ central energy) is itself determinantal, with a kernel that is diagonal in $E$~\cite{FetterWalecka}. However, if one projects onto sideband index alone---as a physical detector does when it integrates over $E$---the marginal occupancies $\widetilde N_\nu=\int_0^{\hbar\Omega}(dE/2\pi\hbar)\,N_\nu(E)$ are sums of independent Bernoulli variables (one per $E$ slice). They take values in $\{0,1,2,\ldots\}$ rather than $\{0,1\}$ and do \emph{not} form a determinantal point process. The simple form~\eqref{eq:determinantal} applies only at fixed $E$.

The kernel also admits a useful closed form via a Christoffel--Darboux-type identity~\cite{Watson1944,Forrester2010} (see Appendix~\ref{app:CD} for the proof). For $\nu\neq\lambda$,
\begin{equation}
  K_A^{\rm (disc)}(\nu,\lambda)
  = \frac{A}{2}\,\frac{J_{\nu-1}(A)J_\lambda(A)-J_\nu(A)J_{\lambda-1}(A)}{\nu-\lambda}.
  \label{eq:CD}
\end{equation}
This representation makes the integrable structure of the kernel manifest and is the starting point for the edge analysis in the next section.

\subsection{Extension: two-terminal devices and multi-step occupations}
\label{sec:multi-step}
A standard two-terminal conductor (QPC or tunnel junction) biased at DC voltage $V_{\rm dc}$ has two reservoirs at chemical potentials $\mu_L$ and $\mu_R=\mu_L-eV_{\rm dc}$~\cite{MoskBuett2002,PedersenBuett1998}. For a single-channel device with transmission $\mathcal T=|\mathcal U^{\alpha L}|^2$ and reflection $\mathcal R=1-\mathcal T$, the effective occupation is
\begin{equation}
  \bar f(E_m) = \mathcal T\,\Theta(m_F^L-m) + \mathcal R\,\Theta(m_F^R-m),
  \label{eq:double-step}
\end{equation}
where $m_F^{L,R}(E)=\lfloor(\mu_{L,R}-E)/\hbar\Omega\rfloor$. This is a double step, not a single step: $\bar f=1$ for $m\le m_F^R$, $\bar f=\mathcal T$ for $m_F^R<m\le m_F^L$, and $\bar f=0$ for $m>m_F^L$. Inserting~\eqref{eq:double-step} into~\eqref{eq:Cgeneral} gives
\begin{equation}
  C_{np}(E) = \mathcal T\,K_A^{\rm (disc)}(n-m_F^L,\,p-m_F^L) + \mathcal R\,K_A^{\rm (disc)}(n-m_F^R,\,p-m_F^R).
  \label{eq:C-two-terminal}
\end{equation}
The correlator is a \emph{weighted sum of two shifted discrete Bessel kernels}, each centered at a different Fermi cutoff. At large $A$, each kernel develops its own Airy edge: one near $n\sim m_F^L+A$ with weight $\mathcal T$, and another near $n\sim m_F^R+A$ with weight $\mathcal R$. When the bias window $m_F^L-m_F^R\sim eV_{\rm dc}/\hbar\Omega$ is much larger than the edge width $\kappa_A=O(A^{1/3})$, the two edges are well separated. Near the upper edge ($n\sim m_F^L+A$), the $\mathcal R$-weighted kernel is deep in its exponential tail and negligible; the local rescaled kernel tends cleanly to $\mathcal T\,K_{\Ai}(s,t)$. Near the lower edge ($n\sim m_F^R+A$), the $\mathcal R$-weighted kernel is at its edge, contributing $\mathcal R\,K_{\Ai}(s,t)$, but the $\mathcal T$-weighted kernel is in its oscillatory bulk and contributes a smooth background that varies slowly on the Airy scale. The \emph{edge correlations}---the Airy-kernel part---come from the $\mathcal R$ term alone, but the full local kernel is $\mathcal R\,K_{\Ai}(s,t)$ plus a smooth additive background from $\mathcal T$. In a tunnel junction ($\mathcal T\ll 1$)~\cite{LesovikLevitov1994}, this background is suppressed by $\mathcal T$ and both edges are cleanly described by thinned Airy kernels with weights $\approx 1$ and $\approx\mathcal T$ respectively. The corresponding gap probability near either clean edge is $\det(\mathrm{Id}-\tau\, K_{\Ai}|_{[s_0,\infty)})$ with $\tau=\mathcal T$ or $\mathcal R$; in the RMT literature the resulting point process is called a \emph{thinned Airy process}. When the bias window is comparable to $\kappa_A$, the two edges overlap and the correlator is a weighted superposition of two Airy kernels centered at nearby points; the resulting statistics are no longer described by a single Airy process.

This extension shows that the same large-$A$ edge scale persists in standard biased two-terminal transport. In the fixed-$E$ correlator, the two Fermi edges give two isolated thinned-Airy edges when well separated; the PASN calculation of Sec.~\ref{sec:transport} provides a complementary bias-space realization of the same diagonal discrete Bessel kernel.

\section{Soft edge and the Airy kernel}
\label{sec:airy}
The Bessel functions in~\eqref{eq:discreteBessel} have three regimes when $A\gg 1$~\cite{Watson1944,Olver1974}: an oscillatory bulk for orders
$|\nu|\ll A$, an exponentially decaying tail for $|\nu|\gg A$, and a turning-point region of width
$\sim A^{1/3}$ near the edge $|\nu|\sim A$. Physically, this transition marks the boundary between easily accessible multi-photon processes (where many sideband channels are open) and exponentially suppressed ones (where the drive can no longer excite carriers into higher sidebands). The Airy kernel emerges precisely from that turning-point region, which is the Floquet analogue of a classical caustic: two stationary-phase points in the integral representation of $J_\nu(A)$ coalesce as $\nu\to A$.

It is convenient to introduce the canonical edge scale
\begin{equation}
  \kappa_A := \left(\frac{A}{2}\right)^{1/3}
  \label{eq:kappa}
\end{equation}
and, for $s\in\mathbb R$, the rescaling
\begin{equation}
  \nu_A(s) := \lfloor A + s\kappa_A\rfloor.
  \label{eq:nuAs}
\end{equation}
We also define the rescaled kernel
\begin{equation}
  \widehat K_A(s,t) := \kappa_A\,K_A^{\rm (disc)}\big(\nu_A(s),\nu_A(t)\big).
  \label{eq:Khat}
\end{equation}
Note that $\widehat K_A(s,t)$ is piecewise constant in $s$ and $t$ (it takes the same value for all $s$ that map to the same integer $\nu_A(s)$). As $A\to\infty$ the step size $1/\kappa_A\to 0$ and the staircase converges to a smooth limit.

The key asymptotic input is the Debye--Olver uniform approximation~\cite{Olver1974} for Bessel functions near the turning point. For integer order $n$ in the transition region, it gives
\begin{equation}
  \kappa_A\, J_n(A) = \Ai\!\left(\frac{n-A}{\kappa_A}\right) + O(A^{-2/3}),
  \label{eq:Olver-exact}
\end{equation}
uniformly for $(n-A)/\kappa_A$ in any compact set. For $n\ge A$, setting $\xi:=(n-A)/\kappa_A$, the Johansson--Olver piecewise estimate~\cite{Johansson2001,Olver1974} gives $|\kappa_A J_n(A)|\le C\exp[-c\min(\kappa_A,\sqrt\xi)\,\xi]$; near the turning region ($\xi=O(1)$) this behaves as $e^{-c\xi^{3/2}}$, matching the Airy tail, while for $\xi\gg 1$ the global decay is at least $e^{-c\xi}$.\footnote{The standard Olver prefactor uses the order $n$ rather than the argument $A$, i.e., $(n/2)^{1/3}$ rather than $(A/2)^{1/3}$; since $n = A+O(A^{1/3})$ in the transition region, the two differ by a relative $O(A^{-2/3})$ that is absorbed in the error term.}

When~\eqref{eq:Olver-exact} is composed with the floor in~\eqref{eq:nuAs}, the discretization $\lfloor A+s\kappa_A\rfloor = A+s\kappa_A+\delta$ with $\delta\in(-1,0]$ shifts the Airy argument by $\delta/\kappa_A = O(A^{-1/3})$. Since $\Ai$ is Lipschitz on compacts, this gives
\begin{equation}
  \kappa_A\, J_{\nu_A(s)}(A) = \Ai(s) + O(A^{-1/3}),
  \label{eq:turningcompact}
\end{equation}
where the $O(A^{-1/3})$ error is dominated by the floor, not by the Olver correction. The convergence $\widehat K_A\to K_{\Ai}$ is unaffected (only the rate changes); we state it as follows.

\begin{thm}[Soft-edge Airy limit, $T=0$]\label{thm:airy}
Under the zero-temperature, single-step assumption~\eqref{eq:Fermistep}, the rescaled kernel~\eqref{eq:Khat} satisfies
\begin{equation}
  \widehat K_A(s,t) \longrightarrow K_{\Ai}(s,t)
  := \int_0^{\infty}\Ai(s+u)\Ai(t+u)\,du
  \label{eq:AiryKernel}
\end{equation}
as $A\to\infty$, locally uniformly in $(s,t)\in\mathbb R^2$.
\end{thm}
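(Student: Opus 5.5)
The plan is to work directly from the series representation~\eqref{eq:discreteBessel} and read it as a Riemann sum. Writing $\nu=\nu_A(s)$, $\lambda=\nu_A(t)$, we have
\[
\widehat K_A(s,t)=\kappa_A\sum_{r\ge 0}J_{\nu_A(s)+r}(A)\,J_{\nu_A(t)+r}(A)
=\frac{1}{\kappa_A}\sum_{r\ge0}\bigl(\kappa_A J_{\nu_A(s)+r}(A)\bigr)\bigl(\kappa_A J_{\nu_A(t)+r}(A)\bigr).
\]
Setting $u=r/\kappa_A$, the index $\nu_A(s)+r$ equals $A+(s+u)\kappa_A+O(1)$, so each factor is $\Ai(s+u)+o(1)$ by~\eqref{eq:Olver-exact} and the floor argument of~\eqref{eq:turningcompact}. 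The mesh is $1/\kappa_A$, so the sum should converge to $\int_0^\infty \Ai(s+u)\Ai(t+u)\,du=K_{\Ai}(s,t)$. The proof therefore reduces to justifying the interchange of limit and infinite sum, uniformly for $(s,t)$ in a compact set $[-M,M]^2$.

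I will split the sum at $r\le R\kappa_A$ and $r>R\kappa_A$, where $R$ is a large fixed cutoff.

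\emph{Head.} For $r\le R\kappa_A$, the scaled indices $(\nu_A(s)+r-A)/\kappa_A$ lie in the compact set $[-M-1,M+R+1]$. There the uniform Olver estimate~\eqref{eq:Olver-exact} gives $\kappa_A J_{\nu_A(s)+r}(A)=\Ai(s+r/\kappa_A)+O(A^{-1/3})$, uniformly in $r$ and in $s\in[-M,M]$. The floor is handled exactly as in~\eqref{eq:turningcompact}, since $\nu_A(s)+r=A+(s+r/\kappa_A)\kappa_A+\delta$ with $\delta\in(-1,0]$. The product error is $O(A^{-1/3})$ per term, using the boundedness of $\Ai$ on compacts. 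Summed over $R\kappa_A$ terms with weight $1/\kappa_A$, this gives $O(R A^{-1/3})\to0$. What remains of the head is a Riemann sum of the continuous function $u\mapsto\Ai(s+u)\Ai(t+u)$ on $[0,R]$. It converges to $\int_0^R$, uniformly in $(s,t)$, by uniform continuity on compacts.

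\emph{Tail.} For $r>R\kappa_A$, set $\xi=(\nu_A(s)+r-A)/\kappa_A\ge R-M-1$, which is positive once $R$ is large. The Johansson--Olver bound quoted before the theorem gives $|\kappa_A J_n(A)|\le C e^{-c\min(\kappa_A,\sqrt\xi)\xi}$. I would bound each factor separately and use Cauchy--Schwarz, so it suffices to control $\kappa_A^{-1}\sum_{r>R\kappa_A}|\kappa_A J_{\nu_A(s)+r}(A)|^2$. The sum splits into two ranges.
\begin{itemize}
\item For $\xi\le\kappa_A^2$, the bound is $e^{-2c\xi^{3/2}}$. The sum is then a Riemann sum dominated by $\int_{R-M-2}^\infty e^{-2c x^{3/2}}dx$, which is small for large $R$.
\item For $\xi>\kappa_A^2$, the bound is $e^{-2c\kappa_A\xi}$. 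Summing this geometric-type series in $r$ gives a contribution that is exponentially small in $\kappa_A^3\sim A$.
\end{itemize}
The corresponding tail $\int_R^\infty|\Ai(s+u)\Ai(t+u)|\,du$ of the limit is also small uniformly for $|s|,|t|\le M$, by the superexponential decay of $\Ai$. Choosing $R$ first and then letting $A\to\infty$ yields local uniform convergence.

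The main obstacle is the tail estimate, namely using the piecewise bound correctly so that the domination is uniform in $A$ across both regimes of $\xi$. The points to check are monotonicity in $\xi$, which is needed to compare the discrete sum with an integral, and the fact that the lower range of $\xi$ depends on $s$ only through the compact set $[-M,M]$.

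An alternative is to start from the Christoffel--Darboux form~\eqref{eq:CD}, which would give the integrable form $(\Ai(s)\Ai'(t)-\Ai'(s)\Ai(t))/(s-t)$. That route requires derivative asymptotics, in the form of a difference quotient $\kappa_A(J_{\nu-1}-J_\nu)\to\Ai'$, together with separate treatment of the diagonal. The series approach avoids both, so I would use it.
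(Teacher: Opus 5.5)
Your proposal is correct and follows essentially the same route as the paper's appendix proof. That proof reads the sum as a Riemann sum with mesh $1/\kappa_A$, controls a compact head $u\in[0,R]$ by the uniform Debye--Olver estimate plus the floor shift, and controls the tail with an $A$-independent exponential majorant. The only difference is cosmetic: the paper uses the cruder global bound $|\kappa_A J_n(A)|\le Ce^{-c\xi}$, which follows from the piecewise estimate because $\min(\kappa_A,\sqrt\xi)\ge 1$ once $\xi\ge 1$ and $A\ge 2$, so your two-regime split and Cauchy--Schwarz step are valid but not needed.
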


The proof (given in Appendix~\ref{app:airy}) rewrites the discrete sum in~\eqref{eq:discreteBessel} as a Riemann sum with mesh $1/\kappa_A$. The convergence of the discrete Bessel kernel to the Airy kernel is known in the combinatorics literature~\cite{Johansson2001,Borodin2000}; we include a self-contained proof adapted to our scaling conventions. The pointwise approximation~\eqref{eq:turningcompact} controls the finite part, and the exponential tail bound provides the uniform domination needed for the exchange of limit and summation.

\begin{cor}[Determinantal edge limit]\label{cor:airyprocess}
At fixed central energy $E$, define the rescaled occupation measure
\begin{equation}
  \Xi_A := \sum_\nu N_\nu\,\delta_{(\nu-A)/\kappa_A}.
  \label{eq:Xi}
\end{equation}
Then the correlation functions of $\Xi_A$ converge to those of the Airy determinantal point process $\Xi_{\Ai}$~\cite{TracyWidom1994,Forrester2010}; full point-process weak convergence follows from the locally uniform kernel convergence of Theorem~\ref{thm:airy} combined with the tail control in Appendix~\ref{app:airy}, by the general criterion for determinantal processes~\cite{HKPV2009,Johansson2001}. In terms of the raw discrete variables, this means
\begin{equation}
  \kappa_A\,\langle N_{\nu_A(s)}\rangle \longrightarrow K_{\Ai}(s,s),
  \qquad
  \kappa_A^2\,\langle N_{\nu_A(s)}N_{\nu_A(t)}\rangle_c \longrightarrow -\abs{K_{\Ai}(s,t)}^2
  \quad (s\neq t),
  \label{eq:edge-scaling}
\end{equation}
with the individual site occupations $\langle N_{\nu_A(s)}\rangle\sim\kappa_A^{-1}K_{\Ai}(s,s)\to 0$ at each fixed $s$, reflecting the dilution of a finite number of edge excitations across a growing number of sites.
\end{cor}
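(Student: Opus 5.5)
The plan is to reduce the corollary to the standard facts about determinantal point processes, using Theorem~\ref{thm:airy} as the analytic input. At fixed $E$ the outgoing sideband occupations form a determinantal point process on $\mathbb Z$ with kernel $K_A^{\rm(disc)}$, by~\eqref{eq:determinantal} and Wick's theorem. The $k$-point correlation function is therefore $\rho_k(\nu_1,\dots,\nu_k)=\det[K_A^{\rm(disc)}(\nu_i,\nu_j)]$. Pushing forward by $\nu\mapsto(\nu-A)/\kappa_A$, the lattice has spacing $1/\kappa_A$. The correlation measure of $\Xi_A$ then has density, with respect to Lebesgue measure, equal to $\det[\widehat K_A(s_i,s_j)]_{i,j\le k}$ on the staircase cells. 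This holds because each lattice site carries mass $1$, which equals $\kappa_A\times$(cell width $1/\kappa_A$). The factor $\kappa_A^k$ distributes one $\kappa_A$ into each row of the determinant, which gives exactly the rescaled kernel~\eqref{eq:Khat}.

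\textbf{Step 1: convergence of correlation functions.} Theorem~\ref{thm:airy} gives locally uniform convergence $\widehat K_A\to K_{\Ai}$. The determinant is a polynomial in finitely many entries, so $\det[\widehat K_A(s_i,s_j)]\to\det[K_{\Ai}(s_i,s_j)]$ locally uniformly. Integrating against compactly supported test functions yields convergence of all correlation measures. The special cases $k=1$ and $k=2$ ($s\neq t$) give~\eqref{eq:edge-scaling} directly. Indeed, $\kappa_A\langle N_{\nu_A(s)}\rangle=\widehat K_A(s,s)$ and $\kappa_A^2\langle N N\rangle_c=-\widehat K_A(s,t)^2$, using that the kernel is real and symmetric. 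For $s\neq t$ and $A$ large, $\nu_A(s)\neq\nu_A(t)$, so the off-diagonal formula applies. The claim $\langle N_{\nu_A(s)}\rangle\to0$ then follows since $K_{\Ai}(s,s)$ is finite.

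\textbf{Step 2: upgrade to weak convergence of point processes.} I would use the criterion that locally uniform kernel convergence of determinantal processes implies convergence of finite-dimensional distributions of counts on bounded sets. Gap probabilities are Fredholm determinants $\det(\mathrm{Id}-K\mathbf 1_B)$. Since the operators $\widehat K_A\mathbf 1_B$ are uniformly bounded and converge in trace norm on compact $B$, the Fredholm determinants converge. For the soft-edge process one also wants control on half-lines $[s_0,\infty)$, that is, convergence of the largest-particle law. This needs trace-norm convergence of $\widehat K_A\mathbf 1_{[s_0,\infty)}$, which I would obtain from the exponential tail bound $|\kappa_AJ_n(A)|\le Ce^{-c\min(\kappa_A,\sqrt\xi)\xi}$. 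Using the factorization $\widehat K_A=B_AB_A^{T}$ with $B_A(s,u)=\kappa_AJ_{\nu_A(s)+\lfloor u\kappa_A\rfloor}(A)$ for $u\ge0$, Hilbert--Schmidt convergence of $B_A\mathbf 1_{[s_0,\infty)}$ implies trace-norm convergence of the product.

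The main obstacle is this uniform tail domination. It must hold uniformly in $A$ for the Hilbert--Schmidt norms over $s,u\in[s_0,\infty)$, including the region $\xi\gg\kappa_A^2$ where only $e^{-c\xi}$ decay is available. I would treat it by splitting at $\xi=\kappa_A^2$. Below the split, $e^{-c\xi^{3/2}}$ is integrable. Above it, the $e^{-c\kappa_A\xi}$ bound is summable, with the sum tending to zero. Dominated convergence then closes the argument, and it repeats the tail control already used in Appendix~\ref{app:airy}.
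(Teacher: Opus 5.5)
Your proposal is correct and follows essentially the same route as the paper. The rescaled correlation functions of $\Xi_A$ are $\det[\widehat K_A(s_i,s_j)]$, and Theorem~\ref{thm:airy} gives their locally uniform convergence; the cases $k=1,2$ yield \eqref{eq:edge-scaling} once $\nu_A(s)\neq\nu_A(t)$. The upgrade to point-process convergence rests on the same exponential tail control as Appendix~\ref{app:airy}, and your factorization $\widehat K_A=B_AB_A^{T}$ is exactly the Riemann-sum representation of Step~2 there, turned into a Hilbert--Schmidt/trace-norm estimate.

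The only slip is cosmetic: the correlation measures of $\Xi_A$ are atomic. The Lebesgue density $\det[\widehat K_A]$ should therefore be read after spreading each atom over its cell of width $1/\kappa_A$, which costs only $o(1)$ against continuous compactly supported test functions.
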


In physical terms: at each fixed $E$, the rescaled sideband correlations near the Fermi edge are governed by the Airy kernel and depend only on the edge scaling variable $s=(\nu-A)/\kappa_A$. For the monochromatic Bessel case proved here, this is independent of the static scatterer $\mathcal U(E)$ (which drops out of the correlator under the conditions of Sec.~\ref{sec:discrete-bessel}). Whether the same Airy edge persists for general drive waveforms---i.e., for the Fourier coefficients $\mathcal F_k$ replacing $J_k(A)$---is plausible on stationary-phase grounds (Sec.~\ref{sec:pearcey}) but has not been proved here.

\begin{figure}[t]
  \centering
  \includegraphics[width=\linewidth]{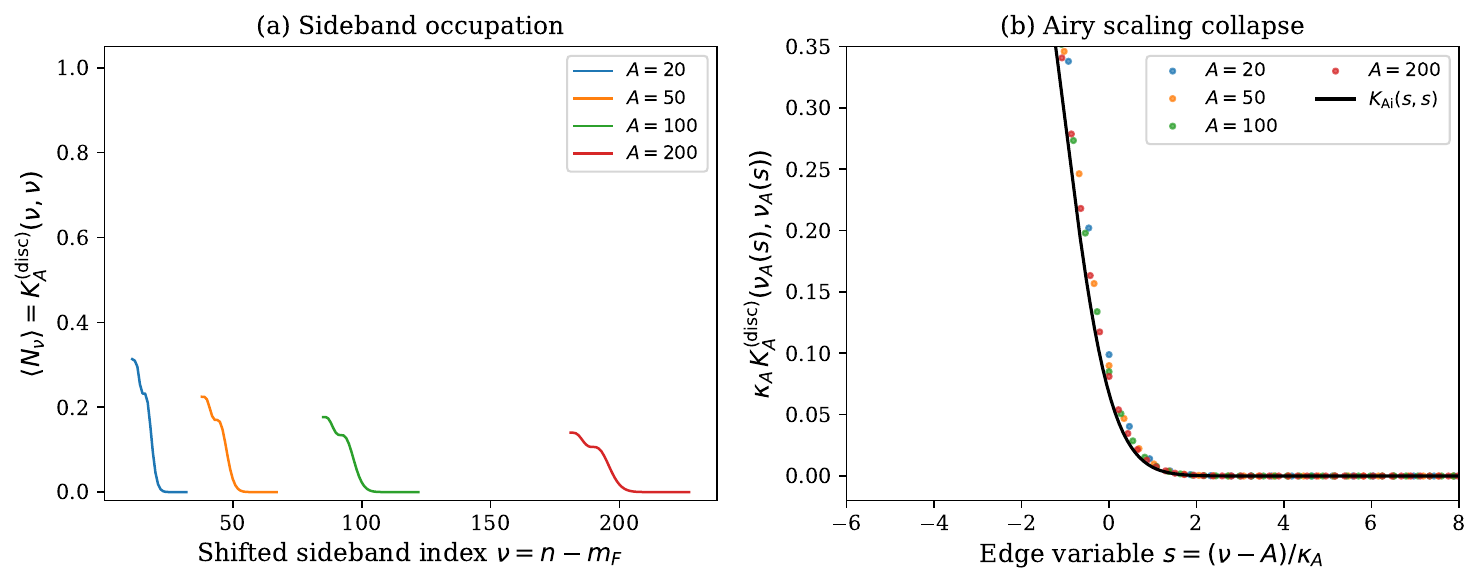}
  \caption{Airy scaling collapse of sideband occupations.
  (a)~Diagonal of the discrete Bessel kernel $K_A^{\rm (disc)}(\nu,\nu) = \langle N_\nu\rangle$ as a function of the shifted sideband index $\nu=n-m_F$ for several drive amplitudes $A$.
  (b)~After rescaling $\nu\to s=(\nu-A)/\kappa_A$ and multiplying by $\kappa_A$, the data for different $A$ collapse onto the diagonal of the Airy kernel $K_{\Ai}(s,s)$ (black curve). The residual scatter at $A=20$ reflects the $O(A^{-1/3})$ discretization error discussed around Eq.~\eqref{eq:turningcompact}.}
  \label{fig:airy-collapse}
\end{figure}

\begin{figure}[t]
  \centering
  \includegraphics[width=\linewidth]{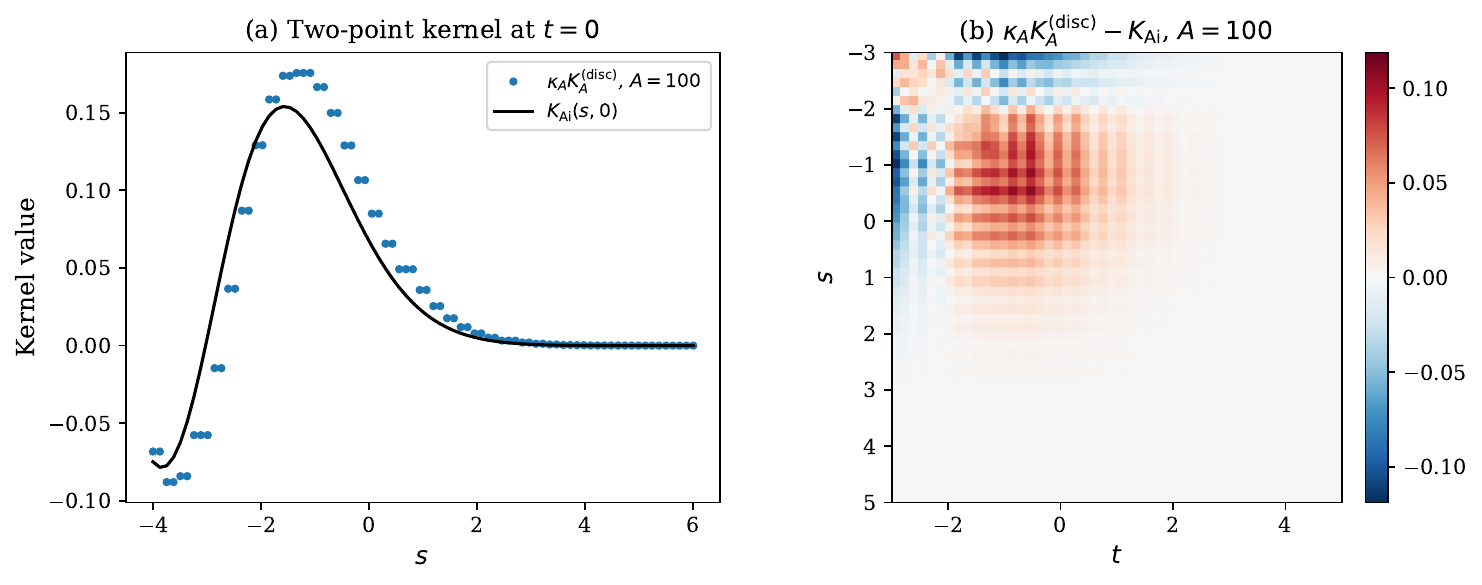}
  \caption{Two-point kernel at the soft edge for $A=100$.
  (a)~Rescaled discrete Bessel kernel $\kappa_A K_A^{\rm (disc)}(\nu_A(s),\nu_A(0))$ (dots) compared to the Airy kernel $K_{\Ai}(s,0)$ (solid curve) as a function of $s$ at fixed $t=0$.
  (b)~Difference $\kappa_A K_A^{\rm (disc)} - K_{\Ai}$ over the edge region, showing $O(A^{-1/3})$ residual errors that decrease toward zero as $A\to\infty$.}
  \label{fig:kernel-2pt}
\end{figure}

Figures~\ref{fig:airy-collapse} and~\ref{fig:kernel-2pt} illustrate the scaling collapse numerically. For $A$ as small as $20$, the rescaled diagonal already tracks the Airy kernel closely (Fig.~\ref{fig:airy-collapse}b); the residual scatter is consistent with the $O(A^{-1/3})$ floor-discretization error. The two-point kernel comparison at $A=100$ (Fig.~\ref{fig:kernel-2pt}) confirms the convergence in the off-diagonal entries as well. For orientation, the discretization error $(A/2)^{-1/3}$ is about $0.34$ at $A=50$ and $0.22$ at $A=200$; these numbers set the scale of the mismatch between the discrete Bessel and Airy kernels. They should not be read as rigorous error bounds on a physical observable, since the actual error depends on the observable, the window, and the $E$-integration.

\section{Beyond the fold: cusps, multi-tone drives, and the Pearcey kernel}
\label{sec:pearcey}

The Airy kernel governs a \emph{generic} soft edge---one where two stationary-phase points in the integral representation of the scattering amplitude coalesce as a control parameter is swept. In the language of catastrophe theory~\cite{ArnoldSingularities,BerryUpstill1980} this is a \emph{fold}, the simplest singularity of codimension one. When additional control parameters are available, higher-codimension singularities become accessible in principle, and the next case in the hierarchy---the \emph{cusp}, of codimension two---is relevant for multi-tone Floquet drives. In this section we give a physical argument for why cusps are natural in this setting and what their observable consequences would be; no rigorous derivation of the Pearcey limit from the Floquet scattering amplitudes is attempted here. Such a derivation would require replacing the Bessel coefficients $J_k(A)$ by the explicit two-tone Fourier coefficients
\begin{equation}
  \mathcal F_k(A_1,A_2,\varphi) = \frac{1}{2\pi}\int_0^{2\pi}e^{i[A_1\sin\theta+A_2\sin(2\theta+\varphi)-k\theta]}\,d\theta,
  \label{eq:two-tone-Fk}
\end{equation}
and then analyzing the stationary-point conditions of the exponent to locate fold and cusp loci. This analysis is left for future work.

\subsection{How a cusp arises in Floquet scattering}

Consider a scatterer driven by two commensurate tones at frequencies $\Omega$ and $2\Omega$, with amplitudes $A_1$ and $A_2$. The time-dependent gauge phase becomes
\begin{equation}
  \phi(t) = A_1\sin(\Omega t) + A_2\sin(2\Omega t + \varphi),
\end{equation}
and the Floquet scattering amplitudes involve products of Bessel functions (or, more generally, Fourier coefficients of $e^{i\phi(t)}$) that depend on both $A_1$ and $A_2$. The resulting correlator $C=S^*NS^T$ still has a spectral edge near the largest populated sideband, but the edge geometry now depends on two parameters.

For generic values of $A_2$ the edge near the largest populated sideband remains a fold (governed by the Airy kernel, as in the single-tone case). However, when $A_2$ is tuned to a critical value, two fold edges---each individually governed by the Airy kernel---can merge into a single cusp. On general grounds from catastrophe theory~\cite{ArnoldSingularities,BerryUpstill1980}, this is the expected codimension-two singularity: the discriminant of the stationary-phase equation $\partial_\xi\Phi=0$ vanishes at the cusp point, and the normal form of the generating phase becomes quartic rather than cubic,
\begin{equation}
  \Phi(\zeta;\sigma,\gamma) = \tfrac{1}{4}\zeta^4 - \tfrac{\gamma}{2}\zeta^2 - \sigma\,\zeta,
  \label{eq:cusp-phase}
\end{equation}
where $\sigma$ is the edge coordinate (analogous to $s$ in the fold case) and $\gamma\ge 0$ is the \emph{unfolding parameter} that controls the distance from the cusp. Setting $\gamma=0$ gives the cusp itself; for $\gamma>0$ the cusp unfolds into two symmetric folds located at $\sigma_{\rm edge}(\gamma)=\pm\frac{2}{3\sqrt{3}}\gamma^{3/2}$.

\subsection{The Pearcey kernel}

Just as the Airy kernel arises from the fold normal form $\zeta^3/3-u\zeta$, one expects the cusp normal form~\eqref{eq:cusp-phase} to produce the \emph{Pearcey kernel}~\cite{Pearcey1946}. In its standard integrable representation~\cite{TracyWidomPearcey,BleherKuijlaars2007,AdlerFerrariVanMoerbeke2010},
\begin{equation}
  K_{\rm P}^{(\gamma)}(\sigma,\sigma')
  = \frac{1}{(2\pi i)^2}\int_\Gamma d\zeta\int_\Sigma d\eta\;
  \frac{\exp\!\big[\tfrac{1}{4}\zeta^4-\tfrac{\gamma}{2}\zeta^2-\sigma\zeta\big]\,
        \exp\!\big[-\tfrac{1}{4}\eta^4+\tfrac{\gamma}{2}\eta^2+\sigma'\eta\big]}
       {\zeta-\eta},
  \label{eq:pearcey-kernel}
\end{equation}
where $\Gamma$ and $\Sigma$ are steepest-descent contours chosen so that the integrals converge. The kernel~\eqref{eq:pearcey-kernel} is a two-point correlation kernel that governs a determinantal point process, just as $K_{\Ai}$ does. It reduces to the Airy kernel in the neighborhood of each fold edge when $\gamma>0$, and to the pure cusp kernel when $\gamma=0$.

\subsection{Expected diagnostic signatures: fold versus cusp}

If the Pearcey kernel indeed governs the cusp edge, the fold and cusp classes would be distinguished by their tail exponents. At a fold, the rescaled edge density converges to the Airy diagonal $K_{\Ai}(s,s)$ with scale $\kappa_A=(A/2)^{1/3}$ (Sec.~\ref{sec:airy}). At a cusp, the natural edge scale is different---it is set by the quartic normal form~\eqref{eq:cusp-phase} and generically involves a power of the drive amplitude distinct from $A^{1/3}$. The precise cusp scale depends on the stationary-phase geometry of the two-tone Fourier integral~\eqref{eq:two-tone-Fk} and has not been derived here.

Regardless of the specific scale, the two classes are distinguished by the \emph{functional form} of the tail. At a fold~\cite{TracyWidom1994,Forrester2010},
\begin{equation}
  K_{\Ai}(s,s) \sim \frac{1}{8\pi s}\,e^{-\frac{4}{3}s^{3/2}}
  \qquad\text{(fold)},
  \label{eq:fold-tail}
\end{equation}
so a log-linear plot in $s^{3/2}$ gives a straight line with slope $-4/3$. At a cusp, the quartic phase produces a tail whose exponent is a different power of the edge coordinate $\sigma$; the specific power depends on the path in the $(\sigma,\gamma)$ unfolding plane and on whether the pure cusp ($\gamma=0$) or a nearby fold ($\gamma>0$) is being probed. What matters for the diagnostic is that the tail is \emph{not} $e^{-\mathrm{const}\cdot s^{3/2}}$: a log-linear plot in $s^{3/2}$ would curve, signaling departure from the fold class.

Two-point fluctuations provide a second diagnostic. At a fold, the sideband-resolved connected correlator scales as $\kappa_A^2\langle N_{\nu_A(s)}N_{\nu_A(t)}\rangle_c\to -|K_{\Ai}(s,t)|^2$~\cite{TracyWidom1994,Forrester2010}, with the same $e^{-4s^{3/2}/3}$ tail. At a cusp, the analogous scaling involves the cusp-specific edge scale (not $\kappa_A$) and the Pearcey two-point kernel; the precise form remains to be worked out.

\subsection{Why Floquet edges are richer than standard RMT edges}
\label{sec:2d-vs-1d}

The appearance of the Pearcey kernel---and the broader catastrophe hierarchy---in Floquet systems is not accidental. It reflects a fundamental difference between the geometry of edges in periodically driven systems and in standard random matrix theory~\cite{Forrester2010,DeanLeDoussal2019}.

\paragraph{Standard RMT: edges in one dimension.}
In a random matrix ensemble with one-dimensional eigenvalue support~\cite{Forrester2010}---Hermitian ensembles on $\mathbb R$, or unitary ensembles on the circle---the eigenvalue density $\rho(x)$ is a function of a single spectral variable $x$. The support of $\rho$ is generically a union of intervals, and an ``edge'' is an endpoint of one such interval. At a generic endpoint the density vanishes as $\rho(x)\sim (x-x_{\rm edge})^{1/2}$---a square-root onset. In the language of stationary phase, this corresponds to the coalescence of two saddle points in the integral representation of the kernel, which is a \emph{fold} singularity. The universal edge kernel is therefore the Airy kernel, and its fluctuation statistics are governed by the Tracy--Widom distribution~\cite{TracyWidom1994,Forrester2010}.

In one dimension, a fold is the \emph{only} generic singularity. To produce a cusp---where three saddle points coalesce simultaneously---one must fine-tune an additional external parameter (such as adding a deterministic source to the random matrix~\cite{BleherKuijlaars2007}). In the absence of such tuning, cusps and higher singularities have codimension $\ge 2$ in the one-dimensional spectral variable and do not occur stably.

\paragraph{Floquet systems: edges in two dimensions.}
In a periodically driven system, the one-body physics lives in the Sambe (Floquet--Bloch) space~\cite{Sambe1973}, which is intrinsically \emph{two-dimensional}: the relevant coordinates are a Floquet angle $\theta\in[0,2\pi)$ (conjugate to the sideband index) and a quasienergy or frequency $\omega$. The outgoing one-body spectral function $\sigma(\theta,\omega)$---or, equivalently, the sideband-resolved occupation $C_{nn}$ as a function of sideband index and central energy---is a function on this two-dimensional cylinder. The ``filled region'' visible to the measurement is the set of $(\theta,\omega)$ points where $\sigma>0$, and the \emph{spectral edge} is the boundary of this region.

This boundary is generically a \emph{curve} in the $(\theta,\omega)$ plane. The singularities of a curve in two dimensions are classified by the $A_k$ catastrophe hierarchy~\cite{ArnoldSingularities}, and the key point is that the classification depends on how many control parameters are available:
\begin{itemize}
  \item \emph{Fold} ($A_2$, codimension 1): a smooth turning point of the boundary curve. This is the generic edge encountered by scanning a single parameter (e.g., the analysis frequency $\omega_0$ at fixed drive). The local kernel is the Airy kernel.
  \item \emph{Cusp} ($A_3$, codimension 2): the boundary curve develops a cusp, where the tangent direction reverses. In two dimensions, the cusp is the canonical codimension-two singularity of smooth curves: extra control parameters make it \emph{accessible}, though its location in $(\theta,\omega)$ shifts under perturbation. In a one-parameter family of edges (e.g., as a second drive amplitude $A_2$ is swept), a cusp is the natural singularity to look for at an isolated critical value of $A_2$, unfolding into two folds on either side. The local kernel at such a cusp is expected to be the Pearcey kernel; the transition from Pearcey to Airy at a cusp point has been rigorously analyzed in the context of random tiling models by Duse, Johansson, and Metcalfe~\cite{DuseJohanssonMetcalfe2015}.
  \item \emph{Swallowtail} ($A_4$, codimension 3): requires three control parameters, hence the natural edge singularity of a three-tone drive or a two-tone drive with one additional tunable coupling.
\end{itemize}

\paragraph{The physical mechanism.}
The connection to the scattering picture is suggestive. In the Sambe (extended Hilbert space) formulation~\cite{Sambe1973}, the Floquet scattering amplitude $S_{nm}(E)$ can be expressed as a matrix element of a resolvent in the space of sideband indices. Heuristically, this admits an oscillatory-integral representation with a phase $\Phi(\xi;\theta,\omega)$ that depends on two external parameters ($\theta$ = Floquet angle, $\omega$ = quasienergy). An edge occurs when stationary points of $\Phi$ in $\xi$ coalesce---a caustic in the sense of diffraction catastrophes~\cite{BerryUpstill1980}. The number of coalescing saddles---and hence the catastrophe type---is determined by the geometry of $\Phi$ in the two-dimensional $(\theta,\omega)$ plane:
\begin{equation}
  \partial_\xi \Phi = 0, \quad \partial_{\xi\xi} \Phi = 0 \quad\text{(fold)};
  \qquad
  \partial_\xi \Phi = 0, \quad \partial_{\xi\xi} \Phi = 0, \quad \partial_{\xi\xi\xi} \Phi = 0 \quad\text{(cusp)}.
  \label{eq:fold-cusp-conditions}
\end{equation}
At a fold, the normal form of the phase is cubic ($\zeta^3/3 - u\zeta$) and the resulting integral is the Airy function. At a cusp, the normal form is quartic ($\zeta^4/4 - \gamma\zeta^2/2 - u\zeta$) and the integral is the Pearcey function.

\paragraph{Summary: dimensionality dictates the universal catalogue.}
The enrichment can be stated concisely. For a single spectral variable with no external tuning parameters, the only generic edge singularity is the fold, and the only generic edge kernel is Airy. Cusp (Pearcey) universality is well established in RMT---it arises in external-source models~\cite{BrezinHikami1998,BleherKuijlaars2007} and has been proved in considerable generality~\cite{TracyWidomPearcey,AdlerFerrariVanMoerbeke2010}---but requires tuning an additional parameter to bring two edges together. A Floquet system provides such a parameter naturally: the two-dimensional Sambe space means that the boundary of the filled region is a curve, and its singularities can include cusps (codimension two) in addition to folds (codimension one). Extra control parameters---such as a second drive tone---make the cusp accessible without artificial tuning, with the Pearcey kernel emerging at the coalescence point~\cite{DuseJohanssonMetcalfe2015}. This is the structural reason why periodically driven systems are a natural physical setting in which the catastrophe-kernel family beyond Airy may be realized.

\section{What the measurement sees at the soft edge}
\label{sec:readout-edge}
We can now bring the threads of the paper together. The Airy kernel is a property of the outgoing many-body state at fixed central energy $E$~\cite{TracyWidom1994,DeanLeDoussal2019}, not of the instrument. The role of the
detection kernel is to select the coherence order $k$ of the Floquet correlator; it does not by itself isolate the soft edge. Combining the projection identity~\eqref{eq:projection}
with the exact block selection of Lemma~\ref{lem:block}, one sees that an integer-period gate and analysis frequency $\omega_0=k\Omega$ give access to a $\mathcal G$-weighted, $E$-integrated sum over the $k$-th off-diagonal of $C$ (Eq.~\eqref{eq:u0-block}). To probe the edge specifically, one needs an observable whose coupling is concentrated near $\nu=n-m_F(E)\sim A$.

\paragraph{An edge-selective observable (ideal, fixed-$E$).}
At fixed central energy $E$, the theoretically cleanest probe of the edge is a sideband-windowed count operator. Let $\chi:\mathbb R\to[0,1]$ be a smooth cutoff and define
\begin{equation}
  \hat Y_\chi(E) := \sum_n \chi\!\left(\frac{n-m_F(E)-A}{\kappa_A}\right) b_n^\dagger(E)b_n(E).
  \label{eq:Ychi}
\end{equation}
This is a diagonal observable ($k=0$) with $\mathcal G_{nn}(E)=\chi\big((n-m_F(E)-A)/\kappa_A\big)$. Its expectation value converges to a universal Airy functional:
\begin{equation}
  \langle\hat Y_\chi(E)\rangle = \sum_n \chi\!\left(\frac{n-m_F(E)-A}{\kappa_A}\right) K_A^{\rm (disc)}(n-m_F,\,n-m_F)
  \;\xrightarrow[A\to\infty]{}\;
  \int_{-\infty}^\infty \chi(s)\,K_{\Ai}(s,s)\,ds,
  \label{eq:Ychi-mean}
\end{equation}
where the limit follows from Theorem~\ref{thm:airy} and a Riemann-sum argument. Choosing $\chi=\mathbf 1_{[s_0,\infty)}$ gives the edge count beyond threshold $s_0$. At fixed $E$, the \emph{zero-count probability}---i.e., the probability that no outgoing fermion occupies any sideband beyond $s_0$---is the Fredholm determinant $F_2(s_0)=\det(\mathrm{Id}-K_{\Ai}|_{[s_0,\infty)})$, which is the Tracy--Widom distribution~\cite{TracyWidom1994,Forrester2010}. (The full counting statistics of the edge count involve more general Fredholm-determinant expressions, not just the zero-count probability.) For an energy-integrated measurement, the count is a superposition of independent fixed-$E$ contributions and its statistics are not directly $F_2$; an energy-resolved readout is needed to access the single-fiber determinantal structure.

We emphasize that $\hat Y_\chi(E)$ is an \emph{ideal} observable: it requires $E$-resolution (i.e., the ability to select a narrow slice of the Floquet Brillouin zone) and knowledge of $m_F(E)$ (which sets the edge location at each $E$). It serves as the benchmark against which practical measurement schemes should be compared.

\paragraph{The measurement gap: coherence order versus sideband index.}
A key limitation of the windowed-heterodyne readout (Sec.~\ref{sec:measurement}) must be stated clearly. Fourier binning at $\omega_0=k\Omega$ resolves the \emph{coherence order} $k=n-p$, not the individual sideband index $n$. The $k=0$ first moment, Eq.~\eqref{eq:u0-block}, is a sum $\sum_n\int dE\,\mathcal G_{nn}C_{nn}$ running over \emph{all} sidebands---bulk and edge alike. There is no software post-processing step that turns a coherence-order measurement into a sideband-resolved one; the information about \emph{which} $n$ contributed is lost in the time-domain projection. Consequently, the edge-selective observable $\hat Y_\chi(E)$ defined above is not directly accessible from a standard heterodyne record.

Three physical routes can bridge this gap, listed in order of decreasing directness:

\emph{(i) Energy-resolving detection} (most direct).
Since sideband $n$ at central energy $E$ carries total energy $E_n=E+n\hbar\Omega$, a narrow-band detector centered at energy $\varepsilon$ selects the unique $(n,E)$ pair satisfying $E+n\hbar\Omega=\varepsilon$ (within the Floquet Brillouin zone). The measured occupancy at $\varepsilon$ is then $C_{nn}(E)$ for that pair. Scanning $\varepsilon$ across the edge region $\varepsilon\sim\mu+A\hbar\Omega$ traces out the sideband occupation profile, giving access to the rescaled density $\kappa_A C_{nn}\to K_{\Ai}(s,s)$.

\emph{(ii) Shot noise plateau deficit} (cleanest standard transport observable).
For a flat-transmission two-terminal conductor, the DC current is linear in bias and $dI/dV$ is featureless. However, the shot noise $S_I(V)$ carries the sideband structure: its first bias derivative realizes the diagonal discrete Bessel kernel $K_A^{\rm(disc)}(\nu_V,\nu_V)$ as a function of bias voltage (Sec.~\ref{sec:transport}), providing a parameter-free prediction of the Airy scaling in a standard DC noise measurement.

\emph{(iii) Differential conductance} (nonlinear devices only).
For devices with energy-dependent transmission (superconductor--insulator--superconductor (SIS) junctions~\cite{TienGordon1963}, resonant levels~\cite{PedersenBuett1998}), $dI/dV$ exhibits a photon-assisted step structure with heights $\propto J_k^2(A)$; the envelope of the rescaled heights $\kappa_A^2 J_k^2(A)$ approaches $\Ai^2(s)$ at the sideband edge. This route does not apply to a flat-transmission QPC.

In all three cases, the detection kernel of Sec.~\ref{sec:measurement} still plays a role (through gating, filtering, and spectral leakage), but the edge-selective information comes from an \emph{additional} spectral or transport degree of freedom, not from the coherence-order projection alone.

\paragraph{Two-point correlations and diagnostic strategy.}
At fixed $E$, the two-point connected correlator is $\langle N_\nu N_\lambda\rangle_c = -|K_A^{\rm (disc)}(\nu,\lambda)|^2$ (Eq.~\eqref{eq:determinantal}), and the rescaled version converges to $-|K_{\Ai}(s,t)|^2$ (Corollary~\ref{cor:airyprocess}). Accessing this experimentally requires either sideband-resolved detection (route (i) above applied at two energies) or shot-noise cross-correlations between two spectrally selective channels. The result is generically a weighted aggregate over products of $C$ entries, not a single $|C_{np}|^2$; but for a sufficiently narrow spectral window, the dominant contribution comes from a single $(n,E)$ pair and the Airy form is recovered.

Two exponent plots and one gap-probability test suffice to identify the universality class and verify universality. First, the rescaled edge density $\kappa_A\langle N_{\nu_A(s)}\rangle\to K_{\Ai}(s,s)$ has the tail~\cite{TracyWidom1994}
\[
  K_{\Ai}(s,s)\sim\frac{1}{8\pi s}\,e^{-4s^{3/2}/3}\qquad(s\to+\infty),
\]
so a plot of $\log(\kappa_A\langle N_{\nu_A(s)}\rangle)$ versus $s^{3/2}$ is asymptotically linear with slope $-4/3$ at a fold. Departure from linearity in this plot---e.g., curvature consistent with a different power of the edge coordinate---would signal a non-fold singularity (Sec.~\ref{sec:pearcey}). Second, the sideband-resolved noise covariance, plotted on the same axes, provides a consistency check (note that the connected correlator is negative, so one plots its absolute value). Third, the gap probability $\Pr(N_\nu=0\ \forall\,\nu\ge\nu_A(s_0))$---the probability that no outgoing fermion lies beyond the scaled position $s_0$---is, at fixed $E$, the Fredholm determinant $\det(\mathrm{Id}-K_{\Ai}|_{[s_0,\infty)})$. As a function of $s_0$, this is the Tracy--Widom $F_2$ distribution~\cite{TracyWidom1994,Forrester2010}. (On a finite interval $[a,b]$, the gap probability is a different Fredholm determinant, not directly $F_2$.) This is a more discriminating test of edge universality than any single-site observable, though it requires $E$-resolved detection to access the determinantal structure.

\begin{figure}[t]
  \centering
  \includegraphics[width=\linewidth]{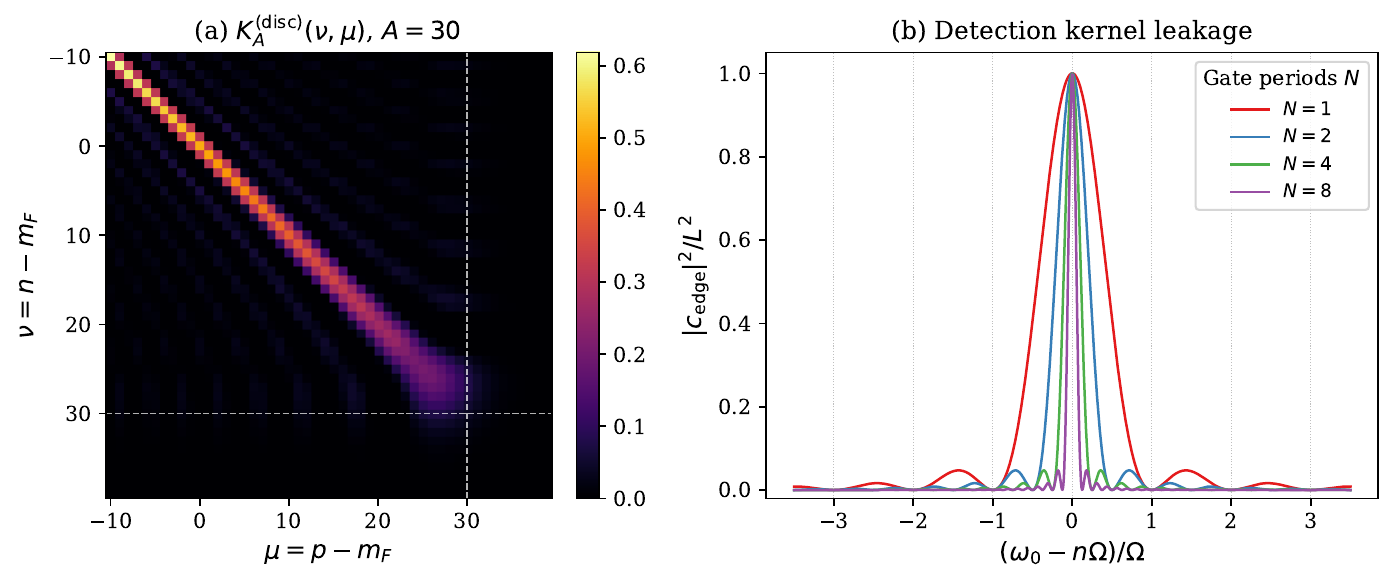}
  \caption{(a)~Heat map of the discrete Bessel kernel $K_A^{\rm (disc)}(\nu,\lambda)$ for $A=30$. The dashed lines mark the edge at $\nu,\lambda=A$; correlations are concentrated inside the ``Fermi square'' $\nu,\lambda\lesssim A$ and decay rapidly beyond. (b)~Normalized detection response $|c_{\rm det}|^2/L^2$ as a function of detuning from the nearest harmonic, for rectangular gates of $N=1,2,4,8$ drive periods. For a non-integer-period gate, spectral leakage from adjacent harmonics contaminates the block-resolved measurement; integer-period gates eliminate this leakage entirely (Lemma~\ref{lem:block}).}
  \label{fig:correlator-leakage}
\end{figure}

\section{Photo-assisted shot noise at the Airy edge}
\label{sec:transport}

The preceding sections identified a gap between the paper's formalism (coherence-order selection via $K_{\rm phys}$) and the sideband edge (which lives within a single coherence-order block). This section closes the gap with a complementary transport calculation in the standard two-terminal photo-assisted-noise geometry. The key result is that the diagonal discrete Bessel kernel appears exactly in the first bias derivative of the shot noise.

\begin{remark}[Why shot noise, not conductance]
For a single channel with energy-independent transmission $\mathcal T$, the time-averaged current under AC drive is $I(V)=(e^2\mathcal T/h)V$, independent of $A$: the Bessel weights satisfy $\sum_n J_n^2(A)=1$ and $\sum_n nJ_n^2(A)=0$, so $dI/dV$ is featureless. A photon-assisted \emph{conductance} staircase arises only when the underlying $I$-$V$ is nonlinear (SIS junctions, resonant levels, energy-dependent density of states (DOS)). In contrast, the shot noise---which measures current \emph{fluctuations}---is sensitive to the sideband structure even for a flat-transmission channel~\cite{PedersenBuett1998,LesovikLevitov1994}. This is why we focus on $dS_I/dV$ rather than $dI/dV$.
\end{remark}

\subsection{The plateau deficit}
For a single spinless channel with energy-independent transmission $\mathcal T$ at zero temperature, the standard photo-assisted partition noise is~\cite{LesovikLevitov1994,PedersenBuett1998,BlanterBuettiker2000}
\begin{equation}
  S_I(V) = \frac{2e^2}{h}\,\mathcal T(1-\mathcal T)\sum_{n\in\mathbb Z} J_n^2(A)\,|eV+n\hbar\Omega|.
  \label{eq:PAN}
\end{equation}
Equation~\eqref{eq:PAN} is the standard PASN expression, well established both theoretically~\cite{LesovikLevitov1994,VanevicNazarovBelzig2007} and experimentally~\cite{Schoelkopf1998,Reydellet2003}. The new result here is its large-$A$ edge asymptotics: we show that the derivative $dS_I/dV$ near the highest active sideband isolates the diagonal discrete Bessel kernel, and that the resulting plateau deficit collapses onto the Airy kernel diagonal under universal scaling.

Set $x:=eV/\hbar\Omega>0$ and differentiate with respect to $V$. Away from the integer values $x\in\mathbb Z$ (where $S_I$ has slope discontinuities due to the absolute values), the derivative is
\begin{equation}
  \frac{dS_I}{dV} = \frac{2e^3}{h}\,\mathcal T(1-\mathcal T)\sum_{n\in\mathbb Z} J_n^2(A)\,\mathrm{sgn}(x+n).
  \label{eq:dSdV}
\end{equation}
At the slope discontinuities $x=\ell\in\mathbb Z$, $dS_I/dV$ has a jump of height $4e^3\mathcal T(1-\mathcal T)J_\ell^2(A)/h$; the formula~\eqref{eq:dSdV} gives the value on each plateau between discontinuities. With $\nu_V:=\lceil x\rceil$ (well-defined for non-integer $x$), the negative-sign terms are exactly those with $n\le -\nu_V$, so by $J_{-n}^2=J_n^2$ we obtain
\begin{equation}
  \frac{dS_I}{dV} = \frac{2e^3}{h}\,\mathcal T(1-\mathcal T)\Big[1-2\sum_{n\ge\nu_V}J_n^2(A)\Big].
  \label{eq:dSdV-sum}
\end{equation}
The sum $\sum_{n\ge\nu}J_n^2(A)$ is precisely the diagonal of the discrete Bessel kernel~\eqref{eq:discreteBessel}, so
\begin{equation}
  \frac{dS_I}{dV} = \frac{2e^3}{h}\,\mathcal T(1-\mathcal T)\Big[1-2K_A^{\rm(disc)}(\nu_V,\nu_V)\Big]
  \qquad (V>0).
  \label{eq:dSdV-kernel}
\end{equation}
The slope approaches a high-bias plateau at $2e^3\mathcal T(1-\mathcal T)/h$.\footnote{For $V<0$, the symmetry $S_I(V)=S_I(-V)$---which follows from $J_{-n}^2=J_n^2$---gives $dS_I/dV=-dS_I/dV|_{-V}$, so the deficit at $-V$ is the same as at~$V$.} Define the \emph{plateau deficit}
\begin{equation}
  \Delta_I(V) := \frac{2e^3}{h}\,\mathcal T(1-\mathcal T) - \frac{dS_I}{dV} = \frac{4e^3}{h}\,\mathcal T(1-\mathcal T)\;K_A^{\rm(disc)}(\nu_V,\nu_V).
  \label{eq:deficit}
\end{equation}
Now place the bias at the upper end of the plateau structure, $\nu_V=\nu_A(s)=\lfloor A+s\kappa_A\rfloor$.\footnote{The transport definition $\nu_V=\lceil eV/\hbar\Omega\rceil$ and the edge-scaling definition $\nu_A(s)=\lfloor A+s\kappa_A\rfloor$ differ by at most one lattice spacing, hence by $O(\kappa_A^{-1})$ in the rescaled variable~$s$; this is negligible in the $A\to\infty$ limit.} Theorem~\ref{thm:airy} gives $K_A^{\rm(disc)}(\nu_A(s),\nu_A(s))\sim\kappa_A^{-1}K_{\Ai}(s,s)$, hence
\begin{equation}
  \kappa_A\,\Delta_I(V) \;\longrightarrow\; \frac{4e^3}{h}\,\mathcal T(1-\mathcal T)\;K_{\Ai}(s,s)
  \qquad \text{as }A\to\infty.
  \label{eq:deficit-Airy}
\end{equation}
The raw $dS_I/dV$ curve approaches its saturation plateau, and the deficit from that plateau has an $O(\kappa_A^{-1})$ universal Airy shape over a voltage window $\Delta V_{\rm edge}\sim(\hbar\Omega/e)\kappa_A\sim(\hbar\Omega/e)A^{1/3}$.

The cleanest experimental test is therefore the collapse plot: define
\begin{equation}
  \mathcal S_A(s) := \frac{h}{4e^3\mathcal T(1-\mathcal T)}\;\kappa_A\,\Delta_I(V),
  \qquad s = \frac{|eV|/\hbar\Omega - A}{\kappa_A}.
  \label{eq:collapse-obs}
\end{equation}
The prediction is $\mathcal S_A(s)\to K_{\Ai}(s,s)$, independently of $A$. Figure~\ref{fig:deficit}(a) verifies this collapse numerically.

It is important to distinguish this result from a more elementary observation. The Debye--Olver turning-point asymptotics~\cite{Watson1944,Olver1974} give the pointwise estimate $\kappa_A^2 J_\ell^2(A)\to\Ai^2(s)$ for the individual Bessel weights near $\ell\sim A$, and this has been known since the classical theory of Bessel functions. But the plateau deficit~\eqref{eq:deficit} involves the \emph{cumulative sum} $\sum_{n\ge\nu_V}J_n^2(A)=K_A^{\rm(disc)}(\nu_V,\nu_V)$, not a single weight $J_{\nu_V}^2(A)$. The two are related by the exact discrete identity
\begin{equation}
  K_A^{\rm(disc)}(\nu,\nu)-K_A^{\rm(disc)}(\nu{+}1,\nu{+}1) = J_\nu^2(A),
  \label{eq:Kdiscrete-diff}
\end{equation}
which in the scaling limit becomes $-\partial_s K_{\Ai}(s,s)=\Ai^2(s)$. Thus $K_{\Ai}(s,s)=\int_s^\infty\!\Ai^2(v)\,dv$ is the \emph{cumulative} edge density (the one-point function of the Airy determinantal process), while $\Ai^2(s)$ is its negative derivative---the pointwise weight envelope. This distinction is experimentally concrete: a lock-in measurement of $d^2S_I/dV^2$ peak heights at $T=0$ (a sum of delta functions at $eV=\ell\hbar\Omega$ with weights $J_\ell^2(A)$) probes the discrete derivative~\eqref{eq:Kdiscrete-diff} and hence the $\Ai^2$ envelope, while the first-derivative deficit probes the kernel diagonal $K_{\Ai}(s,s)$ itself. Figure~\ref{fig:Ai2-vs-KAi} illustrates the comparison.

\begin{figure}[t]
  \centering
  \includegraphics[width=\linewidth]{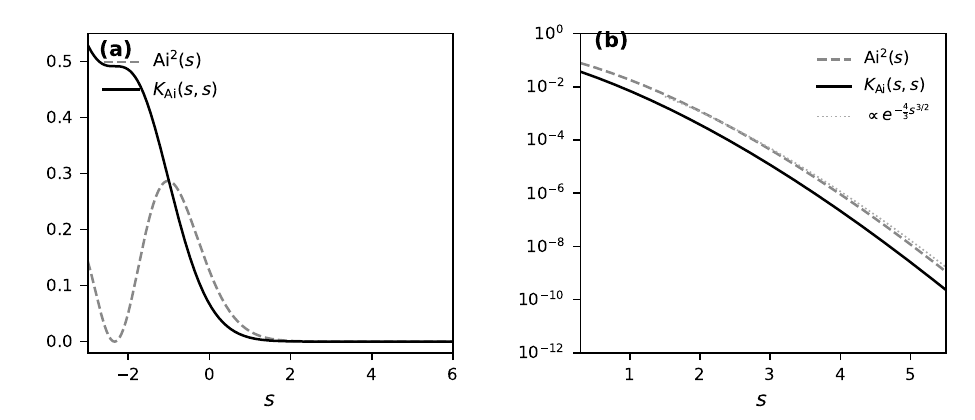}
  \caption{Comparison of $\Ai^2(s)$ (dashed) and the Airy-kernel diagonal
  $K_{\Ai}(s,s)=\Ai'(s)^2-s\,\Ai(s)^2$ (solid), related by
  $-\partial_s K_{\Ai}(s,s)=\Ai^2(s)$.
  (a)~Linear scale: $\Ai^2$ oscillates in the bulk ($s<0$),
  while $K_{\Ai}(s,s)$ decreases monotonically.
  (b)~Log scale in the tail ($s>0$): both decay with the same essential exponential
  $e^{-\frac{4}{3}s^{3/2}}$ (dotted guide) but with different algebraic prefactors
  ($s^{-1/2}$ for $\Ai^2$ versus $s^{-1}$ for $K_{\Ai}$); the pointwise
  envelope $\Ai^2(s)$ is systematically larger than its integral
  $K_{\Ai}(s,s)=\int_s^\infty\Ai^2(v)\,dv$. The plateau deficit~\eqref{eq:deficit}
  measures $K_{\Ai}(s,s)$, not $\Ai^2(s)$.}
  \label{fig:Ai2-vs-KAi}
\end{figure}

\paragraph{Electron--hole pair interpretation.}
Rychkov, Polianski, and B\"uttiker~\cite{RychkovPolianskiBuett2005} first interpreted PASN in terms of photon-created electron--hole pairs; Vanevi\'c, Nazarov, and Belzig~\cite{VanevicNazarovBelzig2007} then showed that the full counting statistics of an AC-driven QPC decompose into elementary events of two kinds: unidirectional single-electron transfers (which carry the DC current) and bidirectional electron--hole pair creation events (which contribute only to noise and higher cumulants). In that framework, the noise involves sideband-resolved pair-creation probabilities that are related to, but not identical to, the Bessel weights $J_k^2(A)$. What the plateau deficit $\Delta_I(V)$ isolates exactly is the \emph{high-sideband tail of the PASN weight decomposition}: $\sum_{n\ge\nu_V}J_n^2(A)$. The Airy scaling of the deficit describes how this tail vanishes at the highest active sideband---the ``soft edge'' of the excitation spectrum. Deep below the edge the observable crosses over to the bulk regime; near $s\sim 0$ it is governed by the universal Airy profile $K_{\Ai}(s,s)$; beyond the edge ($s\to+\infty$) it is exponentially suppressed. The universality of this crossover---independent of $A$---is the content of the Airy-collapse prediction.

This picture also clarifies why the plateau deficit isolates the edge while the total noise does not: the total noise sums over \emph{all} sidebands and is dominated by the bulk ($|k|\ll A$), whereas the deficit at bias $eV\approx A\hbar\Omega$ selects the tail of the distribution, which is precisely the edge region.

\paragraph{Connection to existing experiments.}
Photo-assisted shot noise has been measured in mesoscopic conductors, including a phase-coherent diffusive wire~\cite{Schoelkopf1998} and a ballistic QPC at GHz frequencies~\cite{Reydellet2003}. Both experiments confirmed the Bessel-function structure $\sum_n J_n^2(A)|\cdots|$ in the noise, and the latter provided a quantitative test of the Tien--Gordon/Floquet scattering formula over a range of drive amplitudes. These data already implicitly contain the Airy edge: the Bessel weights $J_n^2(A)$ near $n\sim A$ carry the $\kappa_A^{-2}\Ai^2(s)$ envelope. However, the existing analyses did not plot the noise in the rescaled edge variable $s=(eV/\hbar\Omega-A)/\kappa_A$, which is the representation in which the universal Airy collapse becomes visible. A reanalysis of published data at moderate to large $A$ would provide a direct test of the prediction.

\begin{figure}[t]
  \centering
  \includegraphics[width=\linewidth]{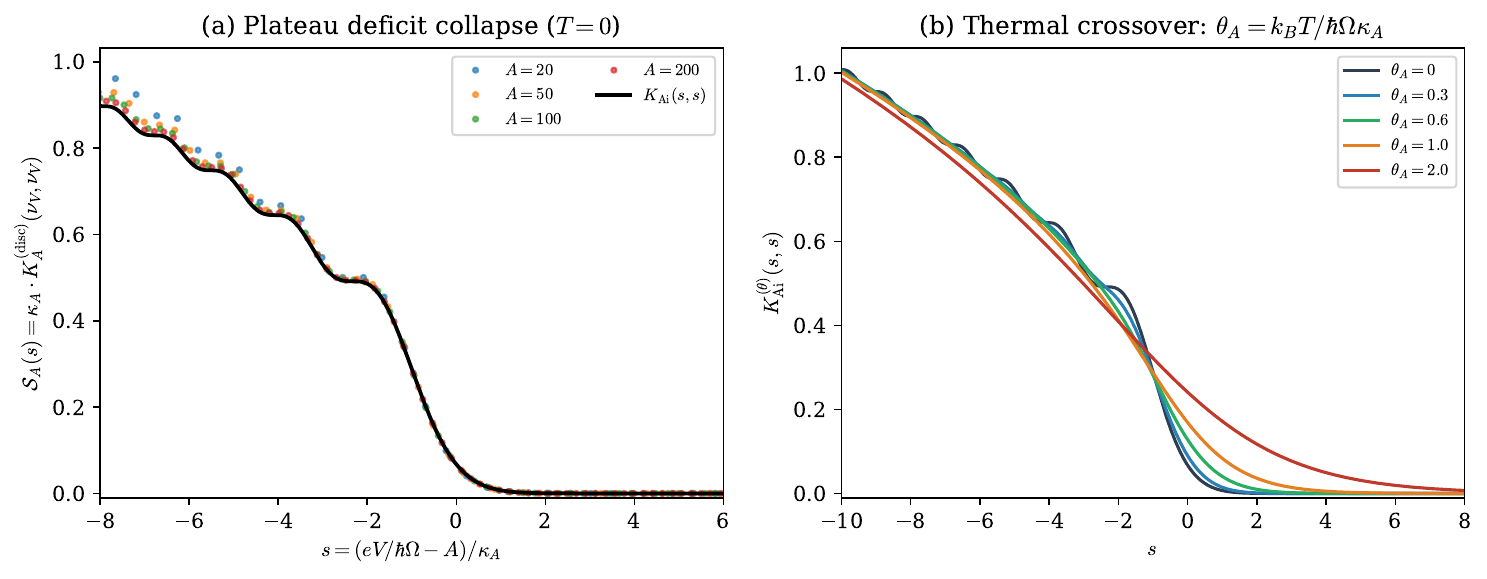}
  \caption{(a)~Plateau deficit collapse at $T=0$ ($V>0$): the normalized deficit $\mathcal S_A(s)=\kappa_A K_A^{\rm(disc)}(\nu_V,\nu_V)$ (dots, Eq.~\eqref{eq:collapse-obs}) converges to $K_{\Ai}(s,s)$ (solid black) for $A=20,50,100,200$, where $s=(eV/\hbar\Omega-A)/\kappa_A$. The result for $V<0$ is identical by the symmetry $S_I(V)=S_I(-V)$.
  (b)~Kernel-level thermal crossover (Sec.~\ref{sec:finite-T}): the finite-temperature Airy kernel diagonal $K_{\Ai}^{(\theta)}(s,s)$ interpolates between the sharp $T=0$ edge at $\theta_A=0$ and a thermally broadened profile at $\theta_A\gtrsim 1$, where $\theta_A=k_BT/(\hbar\Omega\kappa_A)$. The corresponding crossover of the full shot-noise plateau deficit requires a separate derivation (Remark~\ref{rem:scope}).}
  \label{fig:deficit}
\end{figure}

\subsection{Finite temperature and the crossover scale}
\label{sec:finite-T}
At finite electron temperature $T_{\rm el}$, the Fermi step is smoothed over $\Delta n_T\sim k_BT_{\rm el}/\hbar\Omega$ sidebands. The Airy edge occupies $\Delta n_{\rm edge}\sim\kappa_A\sim A^{1/3}$ sidebands. Define the crossover parameter
\begin{equation}
  \theta_A := \frac{k_BT_{\rm el}/\hbar\Omega}{\kappa_A} = \frac{k_BT_{\rm el}}{\hbar\Omega}\left(\frac{2}{A}\right)^{\!1/3}.
  \label{eq:theta-def}
\end{equation}
Replacing the sharp Fermi step by the smooth occupation $\bar f(E_m)=[1+\exp((E_m-\mu)/k_BT_{\rm el})]^{-1}$ in the kernel sum gives the finite-temperature kernel
\begin{equation}
  K_{A,T}(\nu,\lambda) = \sum_{r\in\mathbb Z} J_{\nu+r}(A)\,J_{\lambda+r}(A)\;\frac{1}{1+e^{-(r+\eta)\hbar\Omega/k_BT_{\rm el}}},
  \label{eq:KAT}
\end{equation}
where $\eta(E)=(\mu-E)/\hbar\Omega - m_F(E)\in[0,1)$. The same Riemann-sum argument as in the proof of Theorem~\ref{thm:airy} gives, in the edge scaling regime $\nu=A+s\kappa_A$,
\begin{equation}
  \kappa_A\,K_{A,T}(\nu_A(s),\nu_A(t)) \;\longrightarrow\; K_{\Ai}^{(\theta)}(s,t),
  \label{eq:KAT-limit}
\end{equation}
with the \emph{finite-temperature Airy kernel}~\cite{Johansson2007,DeanLeDoussal2015,DeanLeDoussal2016}
\begin{equation}
  K_{\Ai}^{(\theta)}(s,t) = \int_{-\infty}^{\infty} \frac{\Ai(s+u)\,\Ai(t+u)}{1+e^{-u/\theta}}\,du.
  \label{eq:K-Ai-theta}
\end{equation}
As $\theta\to 0$, this reduces to the ordinary Airy kernel $K_{\Ai}(s,t)=\int_0^\infty\Ai(s+u)\Ai(t+u)\,du$. Figure~\ref{fig:deficit}(b) shows the thermal crossover on the diagonal.

This makes condition~(5) of the Introduction quantitative: the zero-temperature Airy edge is sharp when
\begin{equation}
  k_BT_{\rm el} \ll \hbar\Omega\,\kappa_A = \hbar\Omega\,(A/2)^{1/3},
  \label{eq:T-condition}
\end{equation}
and the crossover to thermal broadening occurs at $\theta_A\sim 1$, i.e., $k_BT_{\rm el}\sim\hbar\Omega\kappa_A$. For a drive frequency $\Omega/2\pi=10\;\mathrm{GHz}$ and $A=50$, one has $hf/k_B\approx 0.48\;\mathrm{K}$ and $\kappa_{50}=(50/2)^{1/3}\approx 2.9$, giving $T_{\rm cross}\sim hf\kappa_A/k_B\approx 1.4\;\mathrm{K}$---well above typical dilution-fridge base temperatures, confirming that the Airy edge should be experimentally accessible.

For $\theta_A\ll 1$, a Sommerfeld expansion of~\eqref{eq:K-Ai-theta} gives the leading thermal correction on the diagonal:
\begin{equation}
  K_{\Ai}^{(\theta)}(s,s) = K_{\Ai}(s,s) - \frac{\pi^2}{3}\,\theta^2\,\Ai(s)\,\Ai'(s) + O(\theta^4).
  \label{eq:Sommerfeld}
\end{equation}
This provides a practical formula for experimentalists who need the low-temperature correction without computing the full integral: the leading effect of finite temperature is a shift proportional to $\Ai(s)\Ai'(s)$, which is antisymmetric about the edge and vanishes at $s=0$.

\begin{remark}[Scope of the finite-temperature result]\label{rem:scope}
This subsection analyzes the finite-temperature crossover of the underlying edge kernel. Deriving the corresponding finite-$T$ shot-noise plateau deficit requires inserting the thermal distribution into the full noise formula~\eqref{eq:PAN}, which yields a convolution of the Airy kernel diagonal with a thermal broadening function; we do not carry out that derivation here.
\end{remark}

\section{Conclusion and outlook}

We close with a summary of the main results, a platform-by-platform assessment of where the conditions for edge universality can be met, and a list of open problems.

\subsection{Summary}
This paper connects two objects that are usually treated separately in the Floquet literature~\cite{MoskBuett2002,PedersenBuett1998}: the linear detection chain that turns a time trace into a measured number, and the Floquet scattering matrix that describes how a periodic drive redistributes quantum states across energy sidebands. The detection kernel $K_{\rm phys}$ encodes the instrument's time gate, filter, and analysis frequency~\cite{Clerk2010}; the outgoing correlator $C$ encodes the device physics. The projection identity~\eqref{eq:projection} makes the bridge between them explicit: the analysis frequency selects a definite \emph{coherence order} $k=n-p$ of $C$, and with a commensurate gate this selection is exact. We have been careful to distinguish what this bridge does and does not provide: it selects a coherence-order block, not an individual sideband index. Accessing the soft edge requires spectral resolution beyond the coherence-order projection: either energy-selective detection, or DC transport observables whose bias dependence is sensitive to the edge. We have derived a concrete prediction for the latter: the photo-assisted shot noise slope $dS_I/dV$ realizes the same diagonal discrete Bessel kernel $K_A^{\rm(disc)}(\nu_V,\nu_V)$ as a function of bias voltage, and the deficit from the high-bias plateau collapses onto $K_{\Ai}(s,s)$ under the universal scaling $s=(eV/\hbar\Omega-A)/\kappa_A$.

Under the conditions enumerated in the Introduction---non-interacting fermions, monochromatic spatially uniform phase drive, wide-band static scatterer, sharp Fermi step (or double step in the two-terminal case), $k_BT_{\rm el}\ll\hbar\Omega\kappa_A$, and large drive amplitude $A\gg 1$---the outgoing sideband process at fixed central energy is determinantal. In the single-step case (equilibrium or single populated reservoir), the kernel is the discrete Bessel kernel $K_A^{\rm(disc)}$, and its soft edge converges to the Airy kernel on the $A^{1/3}$ scale~\cite{TracyWidom1994,Forrester2010}. In the two-terminal double-step case (Sec.~\ref{sec:multi-step}), the kernel is a weighted sum of two shifted Bessel kernels; when the bias window exceeds $\kappa_A$, the edge correlations near each isolated edge are governed by a thinned Airy kernel $\tau\,K_{\Ai}(s,t)$ with $\tau=\mathcal T$ or $\mathcal R$, though the lower edge carries a smooth bulk background from the opposite step (negligible in the tunnel limit). At finite temperature, the edge kernel crosses over to the finite-temperature Airy kernel~\cite{Johansson2007,DeanLeDoussal2015,DeanLeDoussal2016} with crossover parameter $\theta_A=k_BT/(\hbar\Omega\kappa_A)$ (Sec.~\ref{sec:finite-T}). We have argued, on the basis of the two-dimensional structure of Sambe space~\cite{Sambe1973}, that multi-tone drives can promote the fold (Airy) edge to a cusp (Pearcey) edge~\cite{ArnoldSingularities,DuseJohanssonMetcalfe2015}, but the rigorous derivation of this limit from the explicit two-tone Fourier coefficients remains open.

\subsection{Experimental platforms}

The six conditions of the Introduction are restrictive but not unrealizable. We briefly assess the main candidate platforms.

Two requirements pull in different directions. For the \emph{theorem} (Airy edge of the sideband kernel), the ideal setting is a channel with energy-independent transmission under monochromatic drive---a QPC conductance plateau helps with the energy-independence part, but the sharp-step assumption (condition~(4)) is a separate requirement. For the \emph{transport prediction} (shot-noise plateau deficit), the signal carries the prefactor $\mathcal T(1-\mathcal T)$, which vanishes on a plateau where $\mathcal T\approx 0$ or $1$. The cleanest measurement therefore uses a partially transmitting single-channel beam splitter with approximately energy-flat transmission over the relevant sideband window $\sim 2A\hbar\Omega$.

Condition~(4) requires care: a biased two-terminal device has two Fermi steps, giving a double-step occupation (Sec.~\ref{sec:multi-step}), but when the bias window exceeds the edge width $\kappa_A$, the edge correlations near each step are governed by a thinned Airy kernel; the upper edge is clean, while the lower edge carries a smooth bulk background that is negligible for $\mathcal T\ll 1$. Condition~(5) requires $T_{\rm el}\ll T_{\rm cross}\sim hf\kappa_A/k_B$, which at $f=10\;\mathrm{GHz}$ and $A=20$ gives $T_{\rm cross}\approx 1.1\;\mathrm{K}$---easily satisfied in a dilution refrigerator. Larger values $A\sim 50$--$200$ sharpen the collapse but make the wide-band requirement more demanding; moderate $A$ may be the most practical starting point. Condition~(6), $A=eV_{\rm ac}/hf\gg 1$, requires $V_{\rm ac}\sim 0.8$--$8\;\mathrm{mV}$ at $10\;\mathrm{GHz}$ for $A\sim 20$--$200$, which is accessible though large enough that drive-induced heating must be monitored.

The most directly testable prediction is the Airy collapse of the shot-noise plateau deficit (Sec.~\ref{sec:transport}). Photo-assisted shot noise measurements in QPCs are well established~\cite{Schoelkopf1998,Reydellet2003}.

\emph{Metallic tunnel junctions}~\cite{LesovikLevitov1994} (Al/AlO$_x$/Al) are multichannel devices, but in the tunnel limit ($\mathcal T_i\ll 1$ for each channel) the shot noise is a sum of independent single-channel contributions, each weighted by $\mathcal T_i(1-\mathcal T_i)\approx\mathcal T_i$. The plateau deficit for each channel carries the same Airy edge structure; what changes is only the overall prefactor.

\emph{SIS junctions}~\cite{TienGordon1963}, the historical setting of Tien--Gordon physics, satisfy conditions~(2) and~(6) but the BCS gap structure violates the wide-band condition~(3) near the gap edge; whether the $A^{1/3}$ scaling survives there is open. \emph{Superconducting microwave circuits}~\cite{Clerk2010} are typically bosonic, invalidating the determinantal structure, unless a fermionic channel (e.g., an embedded normal-metal junction) is present. \emph{AC-gate-driven quantum dots}~\cite{PedersenBuett1998} suffer from Coulomb blockade (violating condition~(1)) except in the open-dot regime, where achievable amplitudes ($A\sim 1$--$10$) limit the edge-scaling window. \emph{Cold atoms in driven optical lattices}~\cite{PlateroAguado2004} realize non-interacting Floquet regimes, but the Landauer scattering framework does not directly apply and a separate reformulation in terms of quasienergy band structure would be needed.

\subsection{Open directions}
\label{sec:open}

Several extensions of this work are worth pursuing. The most immediate is the explicit derivation of the Pearcey edge scaling from the two-tone Fourier coefficients~\eqref{eq:two-tone-Fk}, including the identification of the cusp-specific edge scale and the verification of the expected quartic normal form by stationary-phase analysis~\cite{TracyWidomPearcey,BleherKuijlaars2007}. A second direction is extending the finite-temperature edge kernel of Sec.~\ref{sec:finite-T} to a full finite-$T$ prediction for the photo-assisted shot-noise plateau deficit, by inserting the thermal occupation into the noise formula~\cite{LesovikLevitov1994} and tracking the crossover at $\theta_A\sim 1$.

Third, an intriguing question is the \emph{edge universality class for general drive waveforms}. The leviton program~\cite{Dubois2013,VanevicNazarovBelzig2007} uses shaped voltage pulses with specific Fourier coefficients $\mathcal F_k$ to engineer minimal-excitation states; for a Lorentzian pulse, $\mathcal F_k$ decays exponentially rather than through the power-law turning-point structure of $J_k(A)$, so the sideband edge is not a fold and the Airy kernel does not apply. More generally, the edge universality class of the sideband kernel built from $\mathcal F_k$ depends on the saddle-point structure of the integral defining $\mathcal F_k$, and different pulse shapes may realize different members of the catastrophe-kernel hierarchy. Classifying these edges for the experimentally relevant pulse families (Lorentzian, cosine, square, multi-tone) is an open problem that would connect the edge-universality picture of this paper to the electron quantum optics toolbox. A distinct direction concerns the extension to \emph{interacting} fermions, where condition~(1) is relaxed. A perturbative fluctuation relation has been shown~\cite{Safi2022} to extend the sideband transmission picture to strongly correlated systems---Luttinger liquids, fractional quantum Hall edges, Josephson junctions in dissipative environments---with the Bessel weights replaced by many-body matrix elements; recent applications to AC-driven fractional quantum Hall systems~\cite{TaktakSafi2025} show that Levitov's theorem on minimal excitations is violated in that regime. Whether any analogue of the Airy edge structure survives with interactions is an open question.

On the bosonic side, the analogue of the present analysis---where the Slater determinant is replaced by a Gaussian state and the Airy kernel enters through the first-order coherence rather than through fermionic Wick contractions~\cite{FetterWalecka}---is an open problem with potential applications to parametrically driven photonic circuits. In that setting, the no-click probability takes the form of a bosonic Fredholm determinant, $P_0=\det(\mathrm{Id}+\kappa K)^{-1}$, and the same soft-edge reduction would imply Tracy--Widom--type tail exponents in directly measurable photostatistics; for phase-sensitive (squeezed) states, anomalous correlators promote this to a symplectic block structure. Such predictions are relevant to platforms where Kardar--Parisi--Zhang universality of bosonic phase fluctuations has recently been established experimentally~\cite{Fontaine2022}. Finally, extending the connection between the detection chain and the correlator to include the effects of feedback, adaptive measurement, and quantum-limited amplification~\cite{Clerk2010} would bring the formalism closer to the operational reality of circuit-QED experiments.

\bigskip
\noindent\textbf{Acknowledgements.}
The author is grateful to Leonardo Santilli and David Trillo for correspondence. This work was supported by the Shanghai Institute for Mathematics and Interdisciplinary Sciences (SIMIS-ID-2025-QT).

% =================================================
\appendix
% =================================================

\section{Technical proofs}
\label{app:proofs}

This appendix collects the two technical proofs deferred from the main text: the Christoffel--Darboux identity for the discrete Bessel kernel (Sec.~\ref{app:CD}) and the Airy-limit convergence theorem (Sec.~\ref{app:airy}). A brief sketch connecting the standard shot-noise formula to the correlator framework is given in Sec.~\ref{app:noise}.

\subsection{Christoffel--Darboux identity for the discrete Bessel kernel}
\label{app:CD}

We prove the identity~\eqref{eq:CD}. Start from the three-term recurrence for Bessel functions~\cite{Watson1944},
\begin{equation}
  \frac{2(\nu+r)}{A}J_{\nu+r}(A) = J_{\nu+r-1}(A)+J_{\nu+r+1}(A).
  \label{eq:recurrence}
\end{equation}
For $\nu\neq\lambda$, multiply by $J_{\lambda+r}(A)$ and subtract the same identity with $\nu$ and $\lambda$ interchanged to obtain
\begin{align*}
  \frac{2(\nu-\lambda)}{A}J_{\nu+r}(A)J_{\lambda+r}(A)
  &= J_{\nu+r-1}(A)J_{\lambda+r}(A)-J_{\nu+r}(A)J_{\lambda+r-1}(A) \\
  &\quad + J_{\nu+r+1}(A)J_{\lambda+r}(A)-J_{\nu+r}(A)J_{\lambda+r+1}(A).
\end{align*}
The right-hand side telescopes when summed over $r=0,1,2,\ldots\,$. Since $J_k(A)\to 0$ as $k\to\infty$ for fixed $A$, the boundary term at infinity vanishes. Collecting the surviving $r=0$ boundary term and dividing by $2(\nu-\lambda)/A$ gives
\[
  K_A^{\rm (disc)}(\nu,\lambda) = \frac{A}{2}\,\frac{J_{\nu-1}(A)J_\lambda(A)-J_\nu(A)J_{\lambda-1}(A)}{\nu-\lambda}.
\]

\subsection{Proof of the soft-edge Airy limit (Theorem~\ref{thm:airy})}
\label{app:airy}

The proof combines the turning-point estimate~\eqref{eq:turningcompact} with a Riemann-sum argument.

\paragraph{Step 1: Pointwise approximation.}
The Debye--Olver uniform asymptotic~\cite{Olver1974} for integer order $n$ near argument $A$ gives $\kappa_A J_n(A) = \Ai((n-A)/\kappa_A) + O(A^{-2/3})$ uniformly on compact sets. Composing with the floor in the rescaling $\nu_A(s)=\lfloor A+s\kappa_A\rfloor$ introduces a shift $\delta/\kappa_A$ with $|\delta|\le 1$ in the Airy argument. By Lipschitz continuity of $\Ai$ on compacts, the resulting error is $O(A^{-1/3})$, giving~\eqref{eq:turningcompact}. The global tail bound $|\kappa_A J_{\nu_A(s)}(A)|\le Ce^{-cs}$ for $s\ge 0$ (the weaker corollary of the piecewise estimate stated in the main text) carries over with the same adjustment.

\paragraph{Step 2: Riemann-sum approximation.}
Fix a compact rectangle $Q=[-M,M]^2$. For $u\ge 0$ define $\phi_A(u;s):=\kappa_A J_{\nu_A(s)+\lfloor u\kappa_A\rfloor}(A)$. By the same argument as Step~1, $\phi_A(u;s)=\Ai(s+u)+O(A^{-1/3})$ uniformly on $Q\times[0,U]$ for any fixed $U>0$. The rescaled kernel is the Riemann sum
\begin{equation}
  \widehat K_A(s,t)
  = \frac{1}{\kappa_A}\sum_{r=0}^{\infty}
    \phi_A\!\left(\frac{r}{\kappa_A};s\right)
    \phi_A\!\left(\frac{r}{\kappa_A};t\right)
\end{equation}
with mesh $1/\kappa_A\to 0$. The partial sum over $r/\kappa_A\in[0,U]$ converges uniformly on $Q$ to $\int_0^U\Ai(s+u)\Ai(t+u)\,du$.

\paragraph{Step 3: Tail control.}
For $U\ge M+1$ and $s\in[-M,M]$, the global tail bound gives $|\phi_A(u;s)|\le Ce^{-c(u-M)}$ for $u\ge M+1$, so $|\phi_A(u;s)\phi_A(u;t)|\le C^2e^{-2c(u-M)}$, which is integrable on $[U,\infty)$ and independent of $A$. The tail sum converges, as $A\to\infty$, to the corresponding integral, which can be made arbitrarily small by taking $U$ large. The same is true for $\int_U^\infty\Ai(s+u)\Ai(t+u)\,du$.

\paragraph{Conclusion.}
Combining Steps~2 and~3 gives $\sup_{Q}|\widehat K_A(s,t)-K_{\Ai}(s,t)|\to 0$. Since $Q$ was arbitrary, the convergence is locally uniform on $\mathbb R^2$.

\subsection{Photo-assisted noise from the correlator framework (sketch)}
\label{app:noise}
The photo-assisted noise formula~\eqref{eq:PAN} is a standard result of Floquet scattering theory~\cite{LesovikLevitov1994,PedersenBuett1998,MoskaletsBook}. Here we do not re-derive it but briefly sketch how it connects to the correlator $C$ of the main text, following~\cite{MoskBuett2002}, in order to make the appearance of the same discrete Bessel kernel in two different observables transparent.
The standard Floquet-scattering expression for zero-frequency PASN is quadratic in the transmission block $\hat t_{nm}(E)=\sqrt{\mathcal T}\,J_{n-m}(A)$ and in the reservoir occupations; after inserting the double-step distribution~\eqref{eq:double-step} one recovers Eq.~\eqref{eq:PAN}. The key identity linking the noise integrand to the correlator $C$, valid at zero temperature where $f_{L,R}$ are sharp Fermi steps, is
\[
  \sum_m f_L(E_m)[1-f_R(E_m)]\,J_{n-m}^2(A) = \mathcal T^{-1}\big[C_{nn}^{(L)}-C_{nn}^{(R)}\big],
\]
where $C^{(L,R)}$ are the correlators built from left and right reservoirs respectively. After carrying out the energy integral and using the step structure, one recovers~\eqref{eq:PAN}. The differentiation $d/dV$ then selects the Fermi-edge contribution, which is precisely $K_A^{\rm(disc)}(\nu_V,\nu_V)$---the same kernel diagonal that governs the fixed-$E$ edge theorem. This is why the same Airy scaling appears in two a priori different observables.

\end{document}